\definecolor{arrowblue}{RGB}{98,145,224}
\newcommand\ImageNode[3][]{
  \node[draw=arrowblue!80!black,line width=1pt,#1] (#2) {\includegraphics[width=0.25\textwidth]{#3}};
}
\newtheorem{definition}{Definition}
\newtheorem{theorem}{Theorem}
\newtheorem{lemma}{Lemma}
\begin{document}

\title{Nonparametric Inference Framework for Time-dependent Epidemic Models}

\author{Son Luu$^1$ \qquad Edward Susko$^1$ \qquad Lam Si Tung Ho$^1$\\[15pt]
$^1$Dalhousie University, Canada
}

\date{}

\maketitle

\begin{abstract}
Compartmental models, especially the  Susceptible-Infected-Removed (SIR) model, have long been used to understand the behaviour of various diseases. Allowing parameters, such as the transmission rate, to be time-dependent functions makes it possible to adjust for and make inferences about changes in the process due to mitigation strategies or evolutionary changes of the infectious agent. In this article, we attempt to build a nonparametric inference framework for stochastic SIR models with time dependent infection rate.
The framework includes three main steps: likelihood approximation, parameter estimation and confidence interval construction. The likelihood function of the stochastic SIR model, which is often intractable, can be approximated using methods such as diffusion approximation or tau leaping. The infection rate is modelled by a B-spline basis whose knot location and number of knots are determined by a fast knot placement method followed by a criterion-based model selection procedure. Finally, a point-wise confidence interval is built using a parametric bootstrap procedure.
The performance of the framework is observed through various settings for different epidemic patterns. The model is then applied to the Ontario COVID-19 data across multiple waves.
\end{abstract}

\section{Introduction}
Compartmental models are a type of mathematical model used to study the spread of infectious diseases through a population. The basic idea of a compartmental model is to divide the population into different compartments based on their disease status, such as the famous Susceptible-Infected-Removed (SIR) model \cite{Kermack1927}. Each compartment represents a group of individuals with the same disease status, and the model tracks the flow of individuals between compartments over time.
These models have been used to study a wide range of infectious diseases, including influenza \cite{Osthus2017, saha2023spade4, yang2015forecasting}, HIV/AIDS \cite{Blum2010, clemenccon2008stochastic, kuhnert2014simultaneous}, plague \cite{ho2018birth, raggett1982stochastic, whittles2016epidemiological}, Ebola \cite{berge2017simple, ho2018direct, saha2023spade4}, and COVID-19 \cite{Roda2020, saha2023spade4, watson2021pandemic} and have been particularly useful for understanding the dynamics of epidemics, including the timing and size of outbreaks, as well as the impact of various control measures.

Many standard epidemic models assume that the epidemic parameters, such as the transmission rate and the recovery rate, are constant over time. In reality, the parameters of an epidemic can change over time due to various factors, such as changes in the behavior of the population, the implementation of interventions, and the emergence of new variants of the pathogen. Therefore, there is a need for time-dependent epidemic models that can capture the dynamics of these changing parameters. 

For inference and prediction, there are two main types of compartmental models: deterministic and stochastic. In the deterministic model, the epidemic dynamics are described by a set of differential equations and the model parameters are often obtained by solving a least square problem. Some works have implemented this model type with time dependent rates \cite{Chen2020,Smirnova2019}. Deterministic models tend to work well for large populations and are computationally efficient. Stochastic models, on the other hand, take into account the random variation in disease transmission within the population. Allowing stochastic variation is particularly valuable when population sizes are not very large because the dynamics can be quite different. Epidemics that are expected to infect the entire population in deterministic models can end prematurely in stochastic models and total numbers of infections can show substantial variation. For these models, the number of individuals in each compartment is often assumed to follow a Markov process. Unfortunately, exact likelihood computation for stochastic compartmental models are typically intractable or time consuming so likelihood based methods tends to use approximation methods such as diffusion approximation \cite{Britton2019,Dargatz2006}. Because of the additional complexity required to fit stochastic models, they rarely incorporate time dependent rates into a stochastic model; but see \cite{bu2024stochastic, huang2024detecting}.

With that in mind, this article explores a nonparametric inference framework for stochastic compartmental models with time dependent rates, specifically the SIR model with time dependent infection rate. There are two main underlying ideas: using a spline basis to estimate the true rates as a function of time and using simpler processes to approximate the often intractable likelihood function. For inference, a fast spline knot placement method \cite{Yeh2020} is employed and assisted by a moving average rate estimate. Then various aspects of the model are examined in a simulation study including approximation type, model selection procedure and numerical considerations. Finally, the model is applied to estimate COVID-19 patterns in Ontario over multiple waves. 

The rest of the paper is structured as follows. Section \ref{background} provides the necessary background; Section \ref{lik_approx} describes the basis for likelihood approximation; Section \ref{RS} introduces the regression spline (RS) framework and how it perform parameter estimation; Section \ref{CI} discusses confidence interval construction; Section \ref{sim_study} discusses the simulation study results; and Section \ref{application} applies the proposed framework to the Ontario COVID-19 data.

\section{Background}\label{background}
This section will go over the definitions and properties of the stochastic processes involved in the model construction, B-spline basis \cite{Wasserman2006}, Wasserstein distance \cite{villani2009optimal} and the parametric bootstrap procedure \cite{efron1994introduction}.

\subsection{Stochastic SIR model}
In this model, the population with an on-going disease is divided into three compartments: susceptible ($S$) for those who are not yet infected, infected ($I$) for those who are infected, and removed ($R$) for those who recovered or died from the disease. As illustrated in figure \ref{FigSIR}, there are two types of movements for an individual in the population: getting infected by the disease ($S\to I$) and recovering (or dying) from the disease ($I\to R$).

For a closed population of $N$ individuals, let $S(t), I(t)$ and $R(t)=N-S(t)-I(t)$ be the number of susceptible, infected and removed individuals at time $t$, respectively. Then at time $t$, individuals move from $S$ to $I$ with rate $\beta(t)S(t)I(t)$ and from $I$ to $R$ with rate $\gamma(t)I(t)$. Here $\beta(t)$ and $\gamma(t)$ are the infection and recovery rates at time $t$, respectively. 

\begin{figure}[h]
    \centering
    \includegraphics[width=0.8\textwidth]{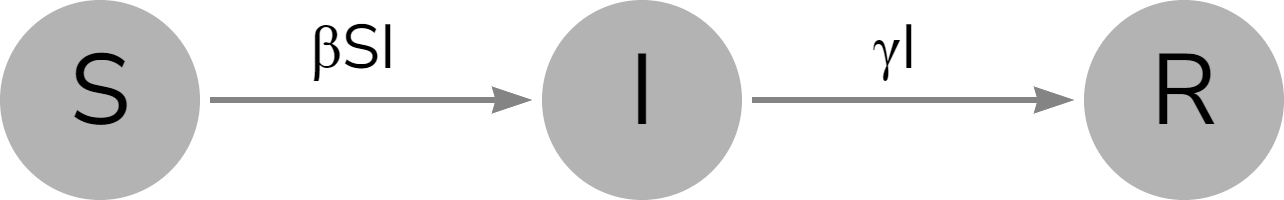}
    \caption{Graph representation of the SIR model.}
    \label{FigSIR}
\end{figure}

In the stochastic SIR model, these movements can be formally described by a bivariate continuous-time Markov process as follows

\begin{definition}\label{def1}
The stochastic SIR model assumes that $X(t) = (S(t), I(t))$ is a bivariate continuous-time Markov process satisfying
\begin{align*}
    X(t) = X(0) + \begin{pmatrix}-1\\ 1 \end{pmatrix} Pois_1\left(\int_0^t\beta(s)\dfrac{S(s)I(s)}{N}ds\right) + \begin{pmatrix}0\\ -1 \end{pmatrix} Pois_2\left(\int_0^t\gamma I(s)ds\right)
\end{align*}
where $Pois_1,Pois_2$ are independent standard Poisson processes.
\end{definition}

For this paper, the focus is on the SIR model where the recovery rate $\gamma$ is constant.

\subsection{Diffusion process}
Diffusion processes are continuous-time stochastic processes whose sample paths are continuous. A simple example of this is the Brownian motion. These processes are often described by a stochastic differential equation (SDE) as follows

\begin{definition}
A diffusion process $Z(t)$ is a continuous-time Markov process that satisfies the Ito SDE
\begin{align*}
    dZ(t) = A(t,Z(t))dt + L(t,Z(t))dB(t)
\end{align*}
where $B(t)$ is a multivariate Brownian motion, $A(t,z)$ and $L(t,z)$ are called the drift vector and diffusion matrix, respectively.
\end{definition}
A simple interpretation of this is that the drift vector controls the mean of the process and the diffusion matrix controls the variance. In later sections, the diffusion process will be used to approximate the stochastic SIR model.

\subsection{B-spline}
B-spline a well known method of curve fitting using piece-wise polynomials.

\begin{definition}
Let $t_0<t_1<t_2<\ldots<t_k<t_{k+1}$ be $k$ points (known as knots) in an interval $(t_0,t_{k+1})$. A B-spline $f$ of order $d+1$ is a piece-wise degree $d$ polynomial defined by the formula
\begin{align*}
    f(t) = \sum_{i=1}^{k+d+1}c_i\varphi_{i,d}(t)
\end{align*}
where $\varphi_{i,d}$ are degree $d$ polynomials in $(t_{i-d-1},t_i)$\footnote{If $j<0$, set $t_j=t_0$. If $j>k+1$, set $t_j=t_{k+1}$ } called the basis functions and $c_i$ are the corresponding coefficients.
\end{definition}

For the construction of the basis functions, set $\tau_1=\ldots=\tau_{d+1}=t_0$, $\tau_{i+d+1}=t_i$ for $i=1,\ldots,k$ and $t_{k+1}=\tau_{k+d+2}=\ldots=\tau_{k+2d+2}$. Then
\begin{align*}
    &\varphi_{i,0}(t) = \begin{cases}
    1 \quad \text{if } t\in [\tau_{i},\tau_{i+1})\\
    0 \quad \text{otherwise}
    \end{cases},\\
    &\varphi_{i,d}(t) = \dfrac{t-\tau_i}{\tau_{i+d}-\tau_i}\varphi_{i,d-1}(t)+\dfrac{\tau_{i+d+1}-t}{\tau_{i+d+1}-\tau_{i+1}}\varphi_{i+1,d-1}(t).
\end{align*}

A note worthy feature of B-spline is that their basis functions have compact support which can speed up calculations \cite{Wasserman2006}.

\subsection{Wasserstein distance}
The Wasserstein distance is commonly used for measuring the difference between two distributions.
\begin{definition}
Let $U,V$ be two $\mathbb{R}^d$-valued random variables. The Wasserstein-1 distance between them is defined as
\begin{align*}
    W_{1}(U,V) = \inf E(\|U-V\|)
\end{align*}
where the infimum is over all possible coupling of $U$ and $V$, i.e. all ways of jointly defining the two variables while respecting their marginal distribution. Note that the norm $\|\cdot\|$ is simply the Euclidean norm.
\end{definition}

To measure the difference between two stochastic processes, we modify the above definition as follow
\begin{definition}
Let $U(t),V(t)$ be two $\mathbb{R}^d$-valued stochastic processes on the interval $[0,T]$. Then the Wasserstein-1 distance between them is
\begin{align*}
    W_{1,T}(U,V)= \inf E(\|U-V\|_T)
\end{align*}
where $\|X\|_T=\sup\limits_{t\in [0,T]} \|X(t)\|$ and the infimum is over all possible coupling of $U(t)$ and $V(t)$, i.e. all ways of jointly defining the two processes while respecting their marginal distribution.
\end{definition}

Next we have a lemma about Wasserstein distance and the point-wise law of processes 
\begin{lemma}\label{lem_Wasserstein}
If the stochastic process sequence $U_n(t)$ and the stochastic process $V(t)$ on $[0,T]$ satisfy
\begin{align*}
    W_{1,T}(U_n,V)\xrightarrow{n\to\infty} 0.
\end{align*}
Then for all $t\in [0,T]$ we have
\begin{align*}
    U_n(t) \xrightarrow{d} V(t)
\end{align*}
in other words, $U_n(t)$ converges in law to $V(t)$.
\end{lemma}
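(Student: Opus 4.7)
The plan is to decompose the proof into two logically separate steps: first, reduce process-level Wasserstein convergence to point-wise Wasserstein convergence at each fixed time $t$; second, invoke the standard fact that convergence in Wasserstein-1 distance on $\mathbb{R}^d$ implies convergence in distribution.

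For the first step, I would fix $t \in [0,T]$ and let $\pi_n$ be an arbitrary coupling of the processes $U_n$ and $V$ on the path space. The pushforward of $\pi_n$ under the evaluation map $(u,v) \mapsto (u(t), v(t))$ is then a coupling of the $\mathbb{R}^d$-valued random variables $U_n(t)$ and $V(t)$, since the marginals are preserved. Because
\begin{equation*}
\|U_n(t)-V(t)\| \;\leq\; \sup_{s \in [0,T]} \|U_n(s)-V(s)\| \;=\; \|U_n-V\|_T
\end{equation*}
holds pointwise on the sample space, taking expectations under $\pi_n$ and then taking the infimum over all such couplings yields
\begin{equation*}
W_1\bigl(U_n(t),\,V(t)\bigr) \;\leq\; W_{1,T}(U_n,V),
\end{equation*}
which tends to $0$ by hypothesis.

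For the second step, I would invoke the Kantorovich--Rubinstein duality, which identifies $W_1(\mu,\nu)$ with the supremum of $|\int f\,d\mu - \int f\,d\nu|$ taken over all $1$-Lipschitz functions $f$. Hence $W_1$-convergence implies $E f(U_n(t)) \to E f(V(t))$ for every bounded Lipschitz $f$, and since bounded Lipschitz functions form a convergence-determining class (Portmanteau theorem), one concludes $U_n(t) \xrightarrow{d} V(t)$.

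The main (mild) obstacle lies in the first step: the infimum defining $W_{1,T}(U_n,V)$ need not be attained, so strictly speaking the bookkeeping must be done via an $\varepsilon$-optimal sequence of couplings $\pi_n^{(k)}$, or by appealing to the existence of optimal couplings on the Polish path space equipped with the sup norm. Beyond this technicality, the argument is routine; the rest of the proof is just assembling the two reductions.
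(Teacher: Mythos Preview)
Your proof is correct, but the route differs from the paper's. You package the argument as \emph{process $W_1$ $\Rightarrow$ pointwise $W_1$ $\Rightarrow$ weak convergence}, invoking Kantorovich--Rubinstein duality and the Portmanteau theorem for the second implication. The paper instead avoids naming either of those results: from the same sup-norm inequality it extracts, via Markov's inequality, that $\inf P(\|U_n(t)-V(t)\|>\epsilon)\to 0$ (infimum over couplings), and then runs an explicit CDF sandwich
\[
F_{V(t)}(u-\epsilon\mathbf{1}) - \inf P(\|U_n(t)-V(t)\|>\epsilon) \;\le\; F_{U_n(t)}(u) \;\le\; F_{V(t)}(u+\epsilon\mathbf{1}) + \inf P(\|U_n(t)-V(t)\|>\epsilon)
\]
to obtain $F_{U_n(t)}(u)\to F_{V(t)}(u)$ directly. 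Your version is more modular and transparently separates the two reductions, at the cost of citing heavier machinery; the paper's version is entirely self-contained and elementary, essentially reproving from scratch that convergence in probability (under a near-optimal coupling) implies convergence in distribution. The technicality you flag about non-attainment of the infimum is genuine but harmless in both arguments, handled exactly as you say by $\varepsilon$-optimal couplings.
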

The proof of this lemma can be found in the appendix.

\subsection{Parametric Bootstrap and Bootstrap confidence intervals}

Consider an estimation problem where the quantity of interest is $\theta$ and the data generating distribution is $F_\theta$. Now assume that we have a procedure to obtain an estimate $\hat{\theta}$ of $\theta$. The parametric bootstrap is a method to estimate the distribution of $\hat{\theta}$. We accomplish that goal by performing the following steps:
\begin{enumerate}
    \item[1. ] Generate a sample from the approximate distribution $F_{\hat{\theta}}$.
    \item[2. ] Obtain an estimate $\hat{\theta}^*$ of $\hat{\theta}$.
    \item[3. ] Repeat steps 1 and 2 $B$ times to get $\hat{\theta}^*_1,\ldots,\hat{\theta}^*_B$.
\end{enumerate}

With the sample of estimates in step 3, we can estimate various aspects of $\hat{\theta}$ such as the bias, variance and confidence interval.

\subsubsection{Bootstrap confidence intervals}
In this subsection, we define all types of bootstrap confidence intervals that are utilized in later sections.
\textbf{Pivotal Interval}
The $1-\alpha$ bootstrap pivotal interval is defined as
\begin{align*}
    CI_{pivotal} = \left(2\hat{\theta}-\hat{\theta}^*_{(B(1-\alpha/2))},\ 2\hat{\theta}-\hat{\theta}^*_{(B\alpha/2)}\right)
\end{align*}
where $\hat{\theta}^*_{(B\alpha)}$ denotes the $100\alpha^{th}$ quantile of $\hat{\theta}^*$. This interval works by utilizing the distribution of the pivot $\hat{\theta}-\theta$.

\textbf{Normal Interval}
The $1-\alpha$ bootstrap normal interval is defined as
\begin{align*}
    CI_{normal} = \left(\hat{\theta}-z_{\alpha/2}s_b,\ \hat{\theta}+z_{\alpha/2}s_b\right)
\end{align*}
where $z_{\alpha/2}$ is the $1-\alpha/2$ quantile of the standard normal and $s_b$ is the bootstrap estimate of the standard error. This interval works under the assumption that the distribution of $\hat{\theta}$ is close to normal, i.e. $\hat{\theta}\sim N(\theta,s^2)$.

\textbf{Percentile Interval}
The $1-\alpha$ bootstrap percentile interval is defined as
\begin{align*}
    CI_{percentile} = \left(\hat{\theta}^*_{(B\alpha/2)},\ \hat{\theta}^*_{(B(1-\alpha/2))}\right).
\end{align*}
This interval works under the assumption that there exists a monotonic increasing transformation $\rho$ such that $\rho(\hat{\theta})\sim N(\rho(\theta),c^2)$.

\subsubsection{Bias correction for Bootstrap confidence intervals}
In many inference problems, especially nonparametric ones, there will be a certain amount of bias
\begin{align*}
    b = E\hat{\theta}-\theta.
\end{align*}
To account for these biases, we will have to make some adjustments to the bootstrap confidence intervals. These adjustments often involve subtracting the bias, which is estimated by the bootstrap bias
\begin{align*}
    \hat{b}^*=\dfrac{1}{B}\sum_{i=1}^B\hat{\theta}^*_i-\hat{\theta}.
\end{align*}

\textbf{Pivotal Interval}
The appearance of a bias does not affect this interval's main assumption, that is the distribution of the pivot $\hat{\theta}-\theta$ is close to the distribution of $\hat{\theta}^*-\hat{\theta}$. Therefore, the pivotal interval has already accounted for bias correction.

\textbf{Normal Interval}
When there is a bias term, the assumption for this interval becomes
\begin{align*}
    \hat{\theta}\sim N(\theta+b,s^2).
\end{align*}
Then, the bias corrected confidence interval will be
\begin{align*}
    CI_{\text{corrected normal}} = \left(\hat{\theta}-\hat{b}^*-z_{\alpha/2}s_b,\ \hat{\theta}-\hat{b}^*+z_{\alpha/2}s_b\right).
\end{align*}

\textbf{Percentile Interval}
In this case, the main assumption is reinterpreted as
\begin{align*}
    \exists\ \rho\text{ monotonic increasing: } \rho(\hat{\theta}-b)\sim N(\rho(\theta),c^2).
\end{align*}
Then the bias corrected percentile confidence interval can be derived as follows.
\begin{align}
\begin{alignedat}{3}
    1-\alpha &= P(\rho(\theta)-z_{\alpha/2}\le \rho(\hat{\theta}-b) \le \rho(\theta)+z_{\alpha/2})\\
    &= P(\rho(\hat{\theta}-b)-z_{\alpha/2}\le \rho(\theta) \le \rho(\hat{\theta}-b)+z_{\alpha/2})\\
    &= P(\rho(\hat{\theta}-b)_{\alpha/2}\le \rho(\theta) \le \rho(\hat{\theta}-b)_{1-\alpha/2}). \label{eq2.29}
\end{alignedat}
\end{align}
Now since $\rho$ is monotonic increasing, it preserves quantiles so $\rho(\hat{\theta}-b)_{\alpha}=\rho(\hat{\theta}_{\alpha}-b)$ for all $\alpha$. Therefore \eqref{eq2.29} becomes
\begin{align}
\begin{alignedat}{3}
    1-\alpha &= P(\rho(\hat{\theta}_{\alpha/2}-b)\le \rho(\theta) \le \rho(\hat{\theta}_{1-\alpha/2}-b))\\
    &=P(\hat{\theta}_{\alpha/2}-b\le \theta \le \hat{\theta}_{1-\alpha/2}-b). \label{eq2.31}
\end{alignedat}
\end{align}
Then the quantiles of $\hat{\theta}$ are estimated by the bootstrap sample while keeping in mind that there is a bias term
\begin{align}
    \forall\ \alpha:\quad \hat{\theta}_{\alpha}\approx \hat{\theta}^*_{B\alpha}-\hat{b}^*. \label{eq2.32}
\end{align}
Plugging \eqref{eq2.32} into \eqref{eq2.31} and replacing $b$ with $\hat{b}^*$, we get
\begin{align*}
    1-\alpha = P(\hat{\theta}^*_{(B\alpha/2)}-2\hat{b}^*\le \theta \le \hat{\theta}^*_{(B(1-\alpha/2))}-2\hat{b}^*).
\end{align*}
Hence, the formula for the bias corrected percentile confidence interval is
\begin{align*}
    CI_{\text{corrected percentile}} = \left(\hat{\theta}^*_{(B\alpha/2)}-2\hat{b}^*,\ \hat{\theta}^*_{(B(1-\alpha/2))}-2\hat{b}^*\right).
\end{align*}

\section{Likelihood Approximation}\label{lik_approx}
Consider the stochastic SIR model, as defined in section \ref{background}, with infection rate function $\beta(t)$ and constant recovery rate $\gamma$. Our goal is to estimate both $\beta(t)$ and $\gamma$ using discretely observed data of the number of susceptible and infected individuals. To this end, we use a set of parameters $\theta=\{\theta_j\}_{j=1}^p$ where $\gamma=\theta_1$ and the other parameters are used to model $\beta(t)$. With our model set up, the likelihood function is as follow
\begin{align}\label{3.4}
L_X(\theta)=\prod_{i=1}^{M-1}P_{\theta,t_i,t_{i+1}}(X(t_{i+1})|X(t_i)).
\end{align}
The biggest problem here is computing or approximating the transition probabilities in \eqref{3.4}. Therefore, a stochastic process whose transition probabilities can be tractably approximated is used to approximate our SIR model. Note that analysis here is of the data $X(t_2),\dots,X(t_M)$, conditional on $X(t_1).$

\subsection{Diffusion approximation}
In this section, the diffusion process used to approximate our SIR is presented along with the convergence results. Now set $x(t) = (s(t),j(t)) = X(t)/N = (S(t)/N,I(t)/N)$. This rescaled process represents the proportion of susceptible and infected in the population. Using the new description, the state space of $x(t)$ can be viewed as ``continuous" for large $N$, making the approximation to a diffusion process more natural.
Next, consider the diffusion process $z(t)=(s_z(t),j_z(t))$ as follow 
\begin{equation}\label{3.5}
\begin{alignedat}{3}
&ds_z(t) = -\beta(t)s_z(t)j_z(t)dt + \sqrt{\dfrac{\beta(t)s_z(t)j_z(t)}{N}}dB_1(t)\\
&dj_z(t) = (\beta(t)s_z(t)j_z(t)-\gamma j_z(t))dt - \sqrt{\dfrac{\beta(t)s_z(t)j_z(t)}{N}}dB_1(t) + \sqrt{\dfrac{\gamma j_z(t)}{N}}dB_2(t)
\end{alignedat}
\end{equation}
where $B_1(t), B_2(t)$ are independent Brownian motions. This process is similar to the deterministic version of the SIR model with added white noise accounting for stochasticity in each compartment. Rewriting \eqref{3.5} to its matrix form gives us
\begin{align}\label{3.6}
dz(t) = A(t,z(t))dt + L(t,z(t))dB(t)
\end{align}
where $B(t)$ is a bivariate Brownian motion and
\begin{align*}
    A(t,z(t)) = \begin{pmatrix} -\beta(t)s_z(t)j_z(t) \\ \beta(t)s_z(t)j_z(t)-\gamma j_z(t) \end{pmatrix},\quad
    L(t,z(t)) = \dfrac{1}{\sqrt{N}}\begin{pmatrix} \sqrt{\beta(t)s_z(t)j_z(t)} & 0 \\ -\sqrt{\beta(t)s_z(t)j_z(t)} & \sqrt{\gamma j_z(t)} \end{pmatrix}.
\end{align*}
Next, we have the following theorem
\begin{theorem}\label{theo3.1}
Let $[0,T]$ be the time interval of the data. Then we have
\begin{align*}
    \sqrt{N}W_{1,T}(x,z)\xrightarrow{N\to\infty}0
\end{align*}
or in other words, $W_{1,T}(x,z)=o(1/\sqrt{N})$.
\end{theorem}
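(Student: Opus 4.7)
The plan is to build a coupling of $x$ and $z$ on a common probability space with $E\|x - z\|_T = o(N^{-1/2})$, which by the definition of $W_{1,T}$ upper-bounds the Wasserstein distance. Such couplings are classical for Markov-chain diffusion approximations (Kurtz; Ethier--Kurtz) and rest on the Koml\'os--Major--Tusn\'ady (KMT) strong approximation of Poisson processes by Brownian motions: on a suitable probability space one may jointly construct a standard Poisson process $P$ and a standard Brownian motion $B$ with $\sup_{u\le U}|P(u) - u - B(u)| = O(\log U)$ almost surely.

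Apply this separately to the two Poisson processes $P_1, P_2$ in Definition \ref{def1}, producing Brownian motions $B_1^{*}, B_2^{*}$. Substituting the KMT expansion into the representation of $X(t)$, dividing by $N$, and using Brownian scaling together with the Dambis--Dubins--Schwarz theorem to rewrite the resulting time-changed Brownian terms as It\^o integrals against a pair of Brownian motions $W_1, W_2$ yields
\begin{align*}
    s(t) = s(0) - \int_0^t \beta s j\,du + \int_0^t \sqrt{\beta s j/N}\,dW_1(u) + R_1^N(t),
\end{align*}
with $\|R_i^N\|_T = O(\log N / N)$ almost surely, and a companion equation for $j(t)$. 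I then define $z$ to be the strong solution of \eqref{3.6} driven by these \emph{same} $W_1, W_2$; this produces the coupling.

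Subtracting the two systems, $x(t) - z(t)$ satisfies an integral equation whose drift difference $A(u,x)-A(u,z)$ is Lipschitz on $[0,1]^2$ (since $\beta$ and $\gamma$ are bounded) and whose diffusion difference involves $\sqrt{\beta s j/N} - \sqrt{\beta s_z j_z/N}$. Using the elementary inequality $(\sqrt a - \sqrt b)^2 \le |a - b|$ together with Burkholder--Davis--Gundy yields $E\|M\|_T^2 \le (C/N)\,E\|x - z\|_T$ for the martingale part $M$. A pathwise Gronwall bound then gives $\|x - z\|_T \le e^{CT}(\|M\|_T + \|R^N\|_T)$, and taking expectations produces $y := E\|x - z\|_T \le C_1\sqrt{y/N} + C_2 \log N / N$. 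Solving this quadratic inequality in $\sqrt{y}$ gives $y = O(\log N / N) = o(N^{-1/2})$, which bounds $W_{1,T}(x,z)$ and proves the theorem.

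The main obstacle is the H\"older-$1/2$ (rather than Lipschitz) dependence of $L$ on the state, which would obstruct a direct Lipschitz-based Gronwall argument on the stochastic integral. The observation $(\sqrt a - \sqrt b)^2 \le |a - b|$ saves the day: only a second-moment estimate via BDG is required, and the $1/\sqrt{N}$ prefactor in $L$ then pushes the martingale contribution below the $O(\log N/N)$ KMT residual. A secondary technicality is keeping $s_z, j_z$ in the region where the square-roots are defined; this is handled by a standard stopping-time localization, using that the coupling keeps $z$ within $O(\log N / N)$ of $x \in [0,1]^2$ throughout $[0,T]$.
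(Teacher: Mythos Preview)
Your argument is essentially correct and even yields the sharper rate $O(\log N/N)$, but it follows a different route from the one the paper points to. The paper does not actually prove Theorem~\ref{theo3.1}; it defers to \cite{Britton2019} and summarises the strategy there as showing that \emph{both} $x$ and $z$ converge in $W_{1,T}$ to the same process. Concretely, one introduces the deterministic ODE solution $\bar{x}$, proves separately that $\sqrt{N}(x-\bar{x})$ and $\sqrt{N}(z-\bar{x})$ each converge in Wasserstein distance to a common Gaussian limit $U$, and then invokes the triangle inequality for the metric $W_{1,T}$ (translation by the deterministic $\bar{x}$ leaves the distance unchanged). Your approach instead builds a single direct coupling via KMT and a DDS representation of the time-changed Brownian terms, then closes a Gronwall loop by exploiting the $1/\sqrt{N}$ prefactor on $L$ to absorb its H\"older-$1/2$ state dependence. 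The triangulation route is more modular---the limit $U$ is of independent interest---and sidesteps the somewhat delicate step of producing a \emph{strong} solution of the square-root SDE \eqref{3.5} driven by the specific Brownian motions $W_1,W_2$ coming out of your DDS construction (your localisation remark handles positivity of $s_z,j_z$ but not this existence issue). Your direct coupling, on the other hand, is more self-contained and delivers an explicit rate rather than a bare $o(N^{-1/2})$.
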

The proof for a more generalized version of theorem \ref{theo3.1}, where $x(t)$ is a general compartmental model, can be found in \cite{Britton2019}. The main idea is to prove that both $x(t)$ and $z(t)$ converge in Wasserstein distance to the same process. With this, we can, for sufficiently large $N$, use lemma \ref{lem_Wasserstein} to replace the likelihood function in \eqref{3.4} with
\begin{align}\label{3.9}
L_z(\theta)=\prod_{i=1}^Mp_{\theta,t_i,t_{i+1}}(z(t_{i+1})=x(t_{i+1})|z(t_i)=x(t_i)).
\end{align}

\subsubsection{Likelihood computation for diffusion processes}
We will now look into methods to compute the conditional densities in \eqref{3.9}. If the SDE in \eqref{3.6} is explicitly solvable, then the likelihood function can be exactly computed. For example, assuming that the solution can be written as
\begin{align}\label{eq:exact}
    z(t) = z(0) + D(t,\theta)+E(t,\theta)B(t)
\end{align}
where $D,E$ are functions of appropriate dimensions. Then
\begin{align}
    z(t_{i+1})|z(t_i) \sim N(z(0) + D(t_{i+1},\theta)+E(t_{i+1},\theta)B(t_i), E(t_{i+1},\theta)\Delta t_iI_2)
\end{align}
where $\Delta t_i=t_{i+1}-t_i$, $I_2$ the rank 2 identity matrix and $B(t_i)$ can be determined by $z(t_i)$ and \eqref{eq:exact}. With this, we can get a closed form expression for $p_{\theta}(z(t_{i+1})|z(t_i))$.

However, the SDE in \eqref{3.6} is not explicitly solvable in general and therefore requires a different approach. The method I settled on involves the simple Euler-Maruyama approximation $\tilde{z}^{(k)}(t)$ of $z(t)$. For all observed time $t_i$, let 
\begin{align}\label{3.12}
\begin{alignedat}{3}
\tau_{ir}&=t_i+r\dfrac{\Delta t_i}{k}=t_i+r\Delta\tau_i,~~~r=1,\dots,k.\\
\tilde{z}^{(k)}(t_i)&=z(t_i)\\
\tilde{z}^{(k)}(\tau_{i(r+1)}) &= \tilde{z}^{(k)}(\tau_{ir}) + A(\tau_{ir},\tilde{z}^{(k)}(\tau_{ir}))\Delta \tau_i + L(\tau_{ir},\tilde{z}^{(k)}(\tau_{ir}))\Delta B_{ir}
\end{alignedat}
\end{align}
where $\Delta B_{ir}=B(\tau_{i(r+1)})-B(\tau_{ir})$. Denote $p_{\theta,t_i,t_{i+1}}^{\tilde{z}^{(k)}}(\cdot|\cdot)$ as the likelihood of $\tilde{z}^{(k)}(t_{i+1})|\tilde{z}^{(k)}(t_{i})$. The following Lemma gives conditions under which the $\tilde z_k(t)$ approximates $z(t)$.
\begin{lemma}
 \cite{Kloeden1992} Under the following conditions:
\begin{itemize}
    \item[(A1)] For all $0<R<\infty,0\le t\le R$, the functions $A(t,\cdot)$ and $L(t,\cdot)$ are Lipschitz continuous in the closed ball $B(\textbf{0},R)$ where $\textbf{0}$ is a vector of 0's.
    \item[(A2)] For all $0<R<\infty$ there exists $0<C_R<\infty$ such that
    \begin{align*}
        \|A(t,x)\|+\|L(t,x)\|\le C_R(1+\|x\|)\quad\forall\ 0\le t\le R,x\in \mathbb{R}^d.
    \end{align*}
    \item[(A3)] $\Sigma(t,x)=L(t,x)L(t,x)^\top$ is positive definite for all $t\ge 0$ and $x\in \mathbb{R}^d$.
\end{itemize}
We have $\tilde{z}^{(k)}(t)\xrightarrow{L_1} z(t)$ for all $t\in [0,T]$ as $k\to \infty$.
\end{lemma}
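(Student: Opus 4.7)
I would couple $z(t)$ and $\tilde z^{(k)}(t)$ on the same Brownian motion $B(t)$ driving \eqref{3.6} and \eqref{3.12} and prove $L^2$ convergence, from which the stated $L^1$ convergence follows by Cauchy--Schwarz. The first step is to obtain uniform moment bounds: (A2) combined with the Burkholder--Davis--Gundy inequality and Gronwall's inequality gives $\sup_{t\in[0,T]} E\|z(t)\|^2 < \infty$, and the analogous calculation applied to the discrete recursion in \eqref{3.12} gives a bound on $E\|\tilde z^{(k)}(t)\|^2$ uniform in both $t\in[0,T]$ and $k$. Via Markov's inequality these bounds control the probability that either process leaves a ball $B(\mathbf{0},R)$.

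\textbf{Key steps.} Extend $\tilde z^{(k)}$ to a continuous-time process by freezing the coefficients on each subinterval $[\tau_{ir},\tau_{i(r+1)})$, and write $\bar s$ for the left endpoint of the subinterval containing $s$. Then
\begin{align*}
\tilde z^{(k)}(t) - z(t) = \int_0^t \bigl[A(\bar s, \tilde z^{(k)}(\bar s)) - A(s, z(s))\bigr]\,ds + \int_0^t \bigl[L(\bar s, \tilde z^{(k)}(\bar s)) - L(s, z(s))\bigr]\,dB(s).
\end{align*}
I would insert the intermediate terms $A(s,\tilde z^{(k)}(\bar s))$ and $L(s,\tilde z^{(k)}(\bar s))$ to split each difference into a time-discretization piece of size $O(|s-\bar s|)=O(1/k)$ and a spatial piece. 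On the localized event $\Omega_R=\{\sup_t \|z(t)\|,\ \|\tilde z^{(k)}(t)\| \le R\}$, (A1) supplies Lipschitz constants $C_R$ for both $A$ and $L$, and It\^o isometry converts the diffusion integral into an $L^2$ norm. Setting $e_k(t) = E\bigl[\|\tilde z^{(k)}(t)-z(t)\|^2 \mathbf 1_{\Omega_R}\bigr]$, these estimates yield an integral inequality of the form
\begin{align*}
e_k(t) \le \frac{C_1(R,T)}{k} + C_2(R) \int_0^t e_k(s)\,ds,
\end{align*}
and Gronwall gives $e_k(T) \le C_1(R,T) e^{C_2(R) T}/k$. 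To pass from the localized bound to unconditional $L^1$ convergence, Cauchy--Schwarz together with the uniform moment bounds controls the contribution on $\Omega_R^c$ by $\sqrt{P(\Omega_R^c)}\cdot O(1)$, which tends to $0$ as $R \to \infty$.

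\textbf{Main obstacle.} The hardest part is the localization: (A1) only furnishes \emph{local} Lipschitz constants $C_R$ that may blow up with $R$, so I must let $R=R(k)\to\infty$ slowly enough that $C_1(R(k),T) e^{C_2(R(k))T}/k \to 0$ while $P(\Omega_{R(k)}^c)\to 0$. The standard remedy is to first upgrade the moment bound to $\sup_{t\in[0,T]} E\|z(t)\|^{2p}<\infty$ for some $p>1$ (by iterating the linear-growth--BDG--Gronwall argument), which sharpens $P(\Omega_R^c)=O(R^{-2p})$ and makes the balancing feasible. Condition (A3) (positive definiteness of $LL^\top$) does not appear to enter the convergence argument; it presumably serves the broader construction of the paper by guaranteeing existence of the transition density $p_{\theta,t_i,t_{i+1}}^{\tilde z^{(k)}}$ used downstream in \eqref{3.9}.
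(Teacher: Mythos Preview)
The paper does not supply its own proof of this lemma: it is stated with a citation to \cite{Kloeden1992}, after which the authors simply check that (A1)--(A3) hold for their particular $A$ and $L$. There is therefore no in-paper argument to compare your proposal against.

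Your sketch is the standard strong-convergence argument for the Euler--Maruyama scheme, and the architecture (common-noise coupling, uniform moment bounds from the linear-growth condition via BDG and Gronwall, local Lipschitz estimates on a large ball, Gronwall again on the error, then delocalization) is correct. Your remark that (A3) plays no role in the convergence is also right: non-degeneracy of $\Sigma$ is stated here only because the paper needs a transition density downstream, not because the $L^1$ limit requires it. The one genuinely delicate point is precisely the one you flag as the main obstacle: under a bare local Lipschitz assumption (A1) the constant $C_2(R)$ in your Gronwall bound can grow arbitrarily with $R$, and then no choice of $R(k)\to\infty$ makes $e^{C_2(R(k))T}/k\to 0$ simultaneously with $P(\Omega_{R(k)}^c)\to 0$, even with higher moments. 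The classical Kloeden--Platen theorems the paper is invoking are in fact stated under \emph{global} Lipschitz hypotheses; in the present SIR application the argument closes because the coefficients are polynomials and square roots of polynomials, so the local Lipschitz constants grow only polynomially in $R$, which is enough to balance against the polynomial tail $P(\Omega_R^c)=O(R^{-2p})$.
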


Note that conditions (A1) and (A2) are satisfied since the components of $A(t,\cdot)$ and $L(t,\cdot)$ are polynomials and square roots of polynomials, respectively. The remaining condition is true as long as the epidemic has not ended, i.e. $i(t)>0$.

Setting $k=1$ in \eqref{3.12}, we have the following scheme
\begin{align}\label{3.14}
\tilde{z}^{(1)}(t_{i+1}) = \tilde{z}^{(1)}(t_i) + A(t_i,\tilde{z}^{(1)}(t_i))\Delta t_i + L(t_i,\tilde{z}^{(1)}(t_i))\Delta B_i
\end{align}
where $\Delta B_i= B(t_{i+1})-B(t_i)$.
With this, we can approximate the likelihood function of $z(t)$ with that of $\tilde{z}^{(1)}(t)$. And due to the construction of $\tilde{z}^{(1)}(t)$ in \eqref{3.14}, we have the following closed form likelihood formula
\begin{align*}
    p_{\theta,t_i,t_{i+1}}^{\tilde{z}^{(1)}}(z_2|z_1)=\phi(z_2-z_1|\Delta t_iA(t_i,z_1), \Delta t_i\Sigma(t_i,z_1))
\end{align*}
where $\phi(\cdot|\mu,\sigma^2)$ is the density of $N(\mu,\sigma^2)$. Now given the data points $x(t_1),\ldots,x(t_M)$, the approximate likelihood is
\begin{align*}
    L_{\tilde{z}^{(1)}}(\theta)=\prod_{i=1}^Mp_{\theta,t_i,t_{i+1}}^{\tilde{z}^{(1)}}(x(t_{i+1})|x(t_i)).
\end{align*}
Another concern here is that in the original SIR model, the states of $x(t)$ are in $[0,1]^2$, which is not the case for $\tilde{z}^{(1)}(t)$. Therefore, in some cases when one or both elements of $x(t_i)$ is 0 or 1, we view it as a censored observation in regard to $\tilde{z}^{(1)}(t)$. These cases are when $s(t_i)=1$ or $j(t_i)=1$ or $s(t_i)=0\wedge s(t_i)\neq0$ or $j(t_i)=0\wedge j(t_i)\neq0$. Now define
\begin{align*}
\begin{alignedat}{3}
\mu(t_i)&=\begin{pmatrix}
    \mu_1(t_i)\\ \mu_2(t_i)
\end{pmatrix}=x(t_{i-1})+A(t_{i-1},x(t_{i-1}))\Delta t_{i-1},\\
\Sigma(t_i)&=\begin{pmatrix}
    \sigma_{11}(t_i) & \sigma_{12}(t_i)\\ \sigma_{12}(t_i) & \sigma_{22}(t_i)
\end{pmatrix}=\Sigma(t_{i-1},x(t_{i-1}))\Delta t_{i-1}
\end{alignedat}
\end{align*}
and
\begin{align}\label{3.18}
\begin{alignedat}{3}
\mu_1^*(t_i)&=\mu_1(t_i)+\dfrac{\sigma_{12}(t_i)}{\sigma_{22}(t_i)}(j(t_{i-1})-\mu_2(t_i)),\\
\mu_2^*(t_i)&=\mu_2(t_i)+\dfrac{\sigma_{12}(t_i)}{\sigma_{11}(t_i)}(s(t_{i-1})-\mu_1(t_i)),\\
(\sigma_1^*(t_i))^2&=\sigma_{11}(t_i)-\dfrac{\sigma_{12}^2(t_i)}{\sigma_{22}(t_i)},\\
(\sigma_2^*(t_i))^2&=\sigma_{22}(t_i)-\dfrac{\sigma_{12}^2(t_i)}{\sigma_{11}(t_i)}.
\end{alignedat}
\end{align}
The terms defined in \eqref{3.18} are just the conditional mean and variance of each component given the other. With this we can write out the likelihood formula for all cases of the data
\begin{align*}
\begin{alignedat}{3}
&p_{\theta,t_{i-1},t_{i}}^{\tilde{z}^{(1)}}(x(t_{i})|x(t_{i-1}))=\\
&\begin{cases}
&\phi(x(t_i)|\mu(t_i),\Sigma(t_i)) \hspace{5.5cm}\text{if }s(t_i),j(t_i)\in(0,1)\\
&\Phi(\boldsymbol{\iota}(x(t_i))|\mu(t_i),\Sigma(t_i))
\hspace{4.9cm}\text{if }s(t_i),j(t_i)\not\in(0,1)\\
&\phi(s(t_i)|\mu_1(t_i),\sigma_{11}(t_i))\Phi(\iota(j(t_i))|\mu_2^*(t_i),(\sigma_2^*(t_i))^2) \quad\text{if }s(t_i)\in(0,1),j(t_i)\not\in(0,1)\\
&\phi(j(t_i)|\mu_2(t_i),\sigma_{22}(t_i))\Phi(\iota(s(t_i))|\mu_1^*(t_i),(\sigma_1^*(t_i))^2) \quad\text{if }s(t_i)\not\in(0,1),j(t_i)\in(0,1)
\end{cases}
\end{alignedat}
\end{align*}
where $\Phi(\boldsymbol{\iota}(x)|\mu,\Sigma)$ and $\Phi(\iota(s)|\mu,\sigma^2)$, respectively, denote the integral of the normal distributions on the corresponding intervals $\boldsymbol{\iota}(x)$ and $\iota(s)$, with
\begin{align*}
    \iota(s) &:= \begin{cases}
        (-\infty,0] \text{ if }s\le 0\\
        [1,\infty)\text{ if }s\ge 1
    \end{cases}, \quad s\in[0,1],\\
    \boldsymbol{\iota}(x) &:= \iota(s)\times\iota(j), \qquad x=(s,j)\in[0,1]^2.
\end{align*}

\subsubsection{Multi-step likelihood approximation}
The Euler-Maruyama approximation described in \eqref{3.14} only makes one jump from one time stamp to the next and the likelihood derived from this is referred to as the 1-step likelihood. Problems with this scheme arise when the time stamps are too far apart or the infection rate changes too quickly between observation times thereby lowering the approximation quality. A solution is to use the k-step scheme in \eqref{3.12} with larger $k$ for better approximation. Note that in the multi-step scheme, we do not know the observations in between the observed times and the likelihood will therefore involve integrating out these values

\begin{theorem}\label{theo3.2}
The likelihood formula for the scheme \eqref{3.12} is as follows
\begin{align}
    p_{\theta,t_i,t_{i+1}}^{\tilde{z}^{(k)}}(z_2|z_1)&=\int \prod_{r=1}^kp_{\theta,\tau_{i(r-1)},\tau_{ir}}^{\tilde{z}^{(1)}}(\xi_{r}|\xi_{r-1})d\xi_1\ldots d\xi_{k-1} \label{3.20}\\
    &= E\left(p_{\theta,\tau_{i(k-1)},t_{i+1}}^{\tilde{z}^{(1)}}(z_2|\tilde{z}^{(k)}(\tau_{i(k-1)}))\left|\tilde{z}^{(k)}(t_i)=z_1\right.\right)\label{3.21}
\end{align}
where $\xi_0=z_1,\xi_k=z_2\in \mathbb{R}^2$.
\end{theorem}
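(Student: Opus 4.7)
The plan is to exploit the fact that the multi-step Euler-Maruyama scheme \eqref{3.12} is, by construction, a discrete-time Markov chain on the refined grid $t_i=\tau_{i0}<\tau_{i1}<\cdots<\tau_{ik}=t_{i+1}$. The first step is to identify the one-step transition density of this chain over the sub-interval $[\tau_{i(r-1)},\tau_{ir}]$. Because each Euler-Maruyama update is of the form $\tilde z^{(k)}(\tau_{ir}) = \tilde z^{(k)}(\tau_{i(r-1)}) + A(\tau_{i(r-1)},\cdot)\Delta\tau_i + L(\tau_{i(r-1)},\cdot)\Delta B_{i(r-1)}$, this one-step density is precisely $p^{\tilde z^{(1)}}_{\theta,\tau_{i(r-1)},\tau_{ir}}(\cdot\mid\cdot)$, i.e.\ the same Gaussian closed form used in the single-step scheme, just over a grid of width $\Delta\tau_i=\Delta t_i/k$ instead of $\Delta t_i$.

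Given this identification, the first equality in \eqref{3.20} follows from the Chapman-Kolmogorov equations applied $k-1$ times. I would state it as: for a discrete Markov chain with transition densities $q_r(\xi_r\mid\xi_{r-1})$, the $k$-step transition density from $\xi_0$ to $\xi_k$ is $\int \prod_{r=1}^k q_r(\xi_r\mid\xi_{r-1})\, d\xi_1\cdots d\xi_{k-1}$. Plugging in $q_r = p^{\tilde z^{(1)}}_{\theta,\tau_{i(r-1)},\tau_{ir}}$ and setting $\xi_0=z_1$, $\xi_k=z_2$ gives the integral representation.

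For the second equality in \eqref{3.21}, I would peel off the last factor in the product. Writing
\begin{align*}
\int \prod_{r=1}^k p^{\tilde z^{(1)}}_{\theta,\tau_{i(r-1)},\tau_{ir}}(\xi_r\mid\xi_{r-1})\,d\xi_1\cdots d\xi_{k-1}
= \int p^{\tilde z^{(1)}}_{\theta,\tau_{i(k-1)},t_{i+1}}(z_2\mid\xi_{k-1}) \, h(\xi_{k-1})\, d\xi_{k-1},
\end{align*}
where $h(\xi_{k-1}) = \int \prod_{r=1}^{k-1} p^{\tilde z^{(1)}}_{\theta,\tau_{i(r-1)},\tau_{ir}}(\xi_r\mid\xi_{r-1})\,d\xi_1\cdots d\xi_{k-2}$ is, by the first part applied with $k-1$ steps, exactly the conditional density of $\tilde z^{(k)}(\tau_{i(k-1)})$ given $\tilde z^{(k)}(t_i)=z_1$. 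The remaining integral is therefore the expectation of $p^{\tilde z^{(1)}}_{\theta,\tau_{i(k-1)},t_{i+1}}(z_2\mid\tilde z^{(k)}(\tau_{i(k-1)}))$ against this conditional law, which is \eqref{3.21}.

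The argument is essentially bookkeeping with the Markov property, so I do not expect a genuine obstacle; the only point that deserves a careful sentence is the first one — justifying that the sub-interval transition density of $\tilde z^{(k)}$ coincides with $p^{\tilde z^{(1)}}_{\theta,\tau_{i(r-1)},\tau_{ir}}$. That reduces to noting that the update rule in \eqref{3.12} matches \eqref{3.14} with $t_i$ replaced by $\tau_{i(r-1)}$ and $\Delta t_i$ replaced by $\Delta\tau_i$, and that the increments $\Delta B_{ir}$ are independent across $r$, so the Gaussian closed-form density carries over verbatim.
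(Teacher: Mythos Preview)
Your proposal is correct and follows essentially the same route as the paper: the paper also treats \eqref{3.20} as immediate (``by definition'') and then, for \eqref{3.21}, peels off the last factor, uses Fubini to recognize the inner integral as the $(k-1)$-step transition density $p^{\tilde z^{(k-1)}}_{\theta,t_i,\tau_{i(k-1)}}(\xi_{k-1}\mid z_1)$, and identifies the remaining outer integral as the desired conditional expectation. Your extra sentence justifying that the sub-interval transition density of $\tilde z^{(k)}$ coincides with $p^{\tilde z^{(1)}}_{\theta,\tau_{i(r-1)},\tau_{ir}}$ is a welcome clarification that the paper leaves implicit.
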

\begin{proof}
Since \eqref{3.20} is by definition, we only need to prove that the right hand side of \eqref{3.20} equals to \eqref{3.21}
\begin{align*}
\int \prod_{r=1}^k & p_{\theta,\tau_{i(r-1)},\tau_{ir}}^{\tilde{z}^{(1)}}(\xi_{r}|\xi_{r-1})d\xi_1\ldots d\xi_{k-1}\\ 
&\ \ =\int p_{\theta,\tau_{i(k-1)},\tau_{ik}}^{\tilde{z}^{(1)}}(\xi_{k}|\xi_{k-1})\prod_{r=1}^{k-1}p_{\theta,\tau_{i(r-1)},\tau_{ir}}^{\tilde{z}^{(1)}}(\xi_{r}|\xi_{r-1})d\xi_1\ldots d\xi_{k-1}\\
&\stackrel{Fubini}{=}\int p_{\theta,\tau_{i(k-1)},\tau_{ik}}^{\tilde{z}^{(1)}}(\xi_{k}|\xi_{k-1})\left(\int\prod_{r=1}^{k-1}p_{\theta,\tau_{i(r-1)},\tau_{ir}}^{\tilde{z}^{(1)}}(\xi_{r}|\xi_{r-1})d\xi_1\ldots d\xi_{k-2}\right) d\xi_{k-1}\\
&\ \ =\int p_{\theta,\tau_{i(k-1)},\tau_{ik}}^{\tilde{z}^{(1)}}(\xi_{k}|\xi_{k-1})p_{\theta,\tau_{i0},\tau_{i(k-1)}}^{\tilde{z}^{(k-1)}}(\xi_{k-1}|\xi_0)d\xi_{k-1}\\
&\ \ =\int p_{\theta,\tau_{i(k-1)},t_{i+1}}^{\tilde{z}^{(1)}}(z_2|\xi_{k-1})p_{\theta,t_i,\tau_{i(k-1)}}^{\tilde{z}^{(k-1)}}(\xi_{k-1}|z_1)d\xi_{k-1}\\
&\ \ =E\left(p_{\theta,\tau_{i(k-1)},t_{i+1}}^{\tilde{z}^{(1)}}(z_2|\tilde{z}^{(k)}(\tau_{i(k-1)}))\left|\tilde{z}^{(k)}(t_i)=z_1\right.\right).
\end{align*}
\end{proof}
Using the law of large numbers and the expression \eqref{3.21} in theorem \ref{theo3.2}, we have the following procedure to approximate the multi-step likelihood
\begin{itemize}
    \item Simulate $B$ sample paths using \eqref{3.12} to get $\tilde{z}^{(k)}_1(\tau_{i(k-1)}),\ldots,\tilde{z}^{(k)}_B(\tau_{i(k-1)})$.
    \item By law of large numbers, we have
    \begin{align*}
        \dfrac{1}{B}\sum_{b=1}^Bp_{\theta,\tau_{i(k-1)},t_{i+1}}^{\tilde{z}^{(1)}}(z_2|\tilde{z}^{(k)}_b(\tau_{i(k-1)})) \xrightarrow{a.s} E\left(p_{\theta,\tau_{i(k-1)},t_{i+1}}^{\tilde{z}^{(1)}}(z_2|\tilde{z}^{(k)}(\tau_{i(k-1)}))\left|\tilde{z}^{(k)}(t_i)=z_1\right.\right).
    \end{align*}
\end{itemize}
We can parallelize the generation of these paths by using multiple processors, one for each path. The trade-off for using the multi-step likelihood is the increased computational time due to the simulations.

\subsection{Tau leaping approximation}
For this method, we go back to the definition of the stochastic SIR model
\begin{align}\label{3.23}
    &X(t) = X(0) + \begin{pmatrix}-1\\ 1 \end{pmatrix} Pois_1\left(\int_0^t\beta(s)\dfrac{S(s)I(s)}{N}ds\right) + \begin{pmatrix}0\\ -1 \end{pmatrix} Pois_2\left(\int_0^t\gamma I(s)ds\right)
\end{align}
where $Pois_1,Pois_2$ are independent standard Poisson processes. Tau leaping is a method for approximating \eqref{3.23} with a process $\tilde{X}^{(k)}$ defined by a scheme similar to the Euler-Maruyama method 
\begin{align}
\begin{alignedat}{3}
\tau_{ir}&=t_i+r\dfrac{\Delta t_i}{k}=t_i+r\Delta\tau_i,\\
\tilde{X}^{(k)}(t_i)&=X(t_i),\\
\tilde{X}^{(k)}(\tau_{i(r+1)}) &= \tilde{X}^{(k)}(\tau_{ir}) + \begin{pmatrix}-1\\ 1 \end{pmatrix} Pois_1\left(\Delta\tau_i\beta(\tau_{ir})\dfrac{S(\tau_{ir})I(\tau_{ir})}{N}\right)\\
&\qquad\qquad\qquad + \begin{pmatrix}0\\ -1 \end{pmatrix} Pois_2\left(\Delta\tau_i\gamma I(\tau_{ir})\right).
\end{alignedat}
\end{align}
With this the likelihood function can be approximated using the transition probabilities of $\tilde{X}^{(k)}$. Specifically, for $k=1$ we have
\begin{align}
\begin{alignedat}{3}
    L_{\tilde{X}^{(1)}}(\theta)&= \prod_{i=1}^M P_{\theta,t_i,t_{i+1}}^{\tilde{X}^{(1)}}(X(t_{i+1})|X(t_i))\\
    &=\prod_{i=1}^M f\left(\Delta W_i\left|\Delta t_i\beta(t_i)\dfrac{S(t_i)I(t_i)}{N}\right.\right)f(\Delta Y_i|\Delta t_i\gamma I(t_i))
\end{alignedat}
\end{align}
where $P_{\theta,t_i,t_{i+1}}^{\tilde{X}^{(1)}}(\cdot|\cdot)$ denotes the probability mass function of $\tilde{X}^{(1)}(t_{i+1})|\tilde{X}^{(1)}(t_i)$, $\Delta W_i=S(t_i)-S(t_{i+1})$, $\Delta Y_i = S(t_i)-S(t_{i+1})+I(t_i)-I(t_{i+1})$ and $f(\cdot|\lambda)$ is the probability mass function of a Poisson distribution with rate $\lambda$.

To compute the multi-step likelihood approximation, we use the same procedure and parallelization devised for diffusion processes with the path simulation method and one step likelihood formula changed to that of tau leaping. However, since the Poisson variables generated each step have different rates across the simulated paths and the implementation of our method in \verb|R| only uses one processor, we cannot vectorize the generation of these paths like in the diffusion approximation case, where paths can be updated by generating a vector of standard normal variables and updating the paths using \eqref{3.12}. Therefore, we do not compute results using multi-step Tau leaping in our implementation due to it being too time-consuming.

\section{Regression Spline (RS) Framework}\label{RS}
\subsection{Model construction}
Consider the stochastic SIR model, as defined in section \ref{background}, with infection rate function $\beta(t)$ and constant recovery rate $\gamma$. Our goal is to estimate both $\beta(t)$ and $\gamma$ using discretely observed data of the number of susceptible and infected individuals. To this end, a B-spline basis is used for modeling $\beta(t)$. In summary, the model can be written as follows
\begin{align*}
&X(t) = (S(t),I(t)):\text{ stochastic SIR model with rates } \beta(t),\gamma\\
&X(t_1),X(t_2),\ldots,X(t_M): \text{ observed states at times }t_1,t_2,\ldots,t_M\\
&\gamma = \theta_1,\quad 
\beta(t) = \sum_{i=1}^{K+d+1}\theta_{i+1}\phi_{i,d}(t)
\end{align*}
where $K,d$ are the number of knots and degree of the B-spline basis, respectively, and $\theta_i$ are the coefficients.
\subsection{Parameter Estimation}
With a method to approximately compute the likelihood function, the maximum likelihood estimate (MLE) for the model parameters can be found using built-in \verb|R| functions such as \verb|optim|. Before that, we need to fine tune the hyperparameters, specifically the number of knots and their locations. For knot location, we can use the knot placement method in \cite{Yeh2020}. In the paper, we are given the values of the curve $\beta(t)$ at times $u_0,\ldots,u_m$ and the goal is to find the knots $\kappa_1,\ldots,\kappa_K$ for the degree $d$ B-spline basis used to estimate $\beta(t)$. The method determines knots so that the cumulative rate of change in $\beta(t)$, where that rate of change is measured by $\beta^{(d)}(t)$, is constant between knots. 
This is achieved by the following steps:
\begin{enumerate}
    \item[1.] Calculate the $(d+1)^{th}$ derivative $\beta^{(d)}(t)$ of $\beta(t)$ using the formula
    \begin{align}\label{2.24}
        \beta^{(j+1)}(u^{(j+1)}_i)=\dfrac{\beta^{(j)}(u^{(j)}_{i+1})-\beta^{(j)}(u^{(j)}_{i})}{u^{(j)}_{i+1}-u^{(j)}_{i}},\quad u^{(j+1)}_{i}=\dfrac{1}{2}(u^{(j)}_{i}+u^{(j)}_{i+1})
    \end{align}
    where $\beta^{(0)}(t)=\beta(t)$. Note that \eqref{2.24} implies that each derivative level has its own time stamps which are the midpoints of the previous level's time stamps.
    \item[2.] Calculate the feature function $f(u)$. The feature function $f(u)$ is defined as the piecewise linear function that satisfies
    \begin{align*}
        f_i=f(\bar{u}_i)=\begin{cases}
            0 &\text{if }i=0,m-d\\
            \|\beta^{(d+1)}(u^{(d+1)}_i)\|^{1/(d+1)} &\text{otherwise}
        \end{cases}
    \end{align*}
    where $\bar{u}_0=u_1,\bar{u}_{m-d}=u_m$ and $\bar{u}_i=u^{(d+1)}_i$ for $0<i<m-d$. \item[3.] Calculate the cumulative feature function $F(u)$. $F(u)$ is defined as the integral of $f(u)$, i.e.
    \begin{align*}
        F(u)=\int\limits_{-\infty}^uf(v)dv.
    \end{align*}
    \item[4.] Obtain knot locations from the feature curve by setting $\kappa_j=F^{-1}(j\Delta F)$ where $\Delta F=\max\limits_u \frac{F(u)}{k-1}$. In other words, divide the feature curve into segments with equal amount of increase and set the corresponding time stamps as knots. Computation of $F^{-1}(u)$ can be simplified by pretending $F(u)$ is a piecewise linear function and values at $\bar{u}_i$ calculated using trapezoid rule for $f_i$.
\end{enumerate}
The last four plots in Figure \ref{knotplacement} show how these steps are carried out given the data in the first plot.

\begin{figure}
    \centering
    \begin{tikzpicture}[
  node distance=1.5cm,
  >={Triangle[angle=60:1pt 2]},
  shorten >= 2pt,
  shorten <= 2pt,
  arrow/.style={
    ->,
    arrowblue,
    line width=5pt
  }
]
\ImageNode[label={-90:Input data}]{A}{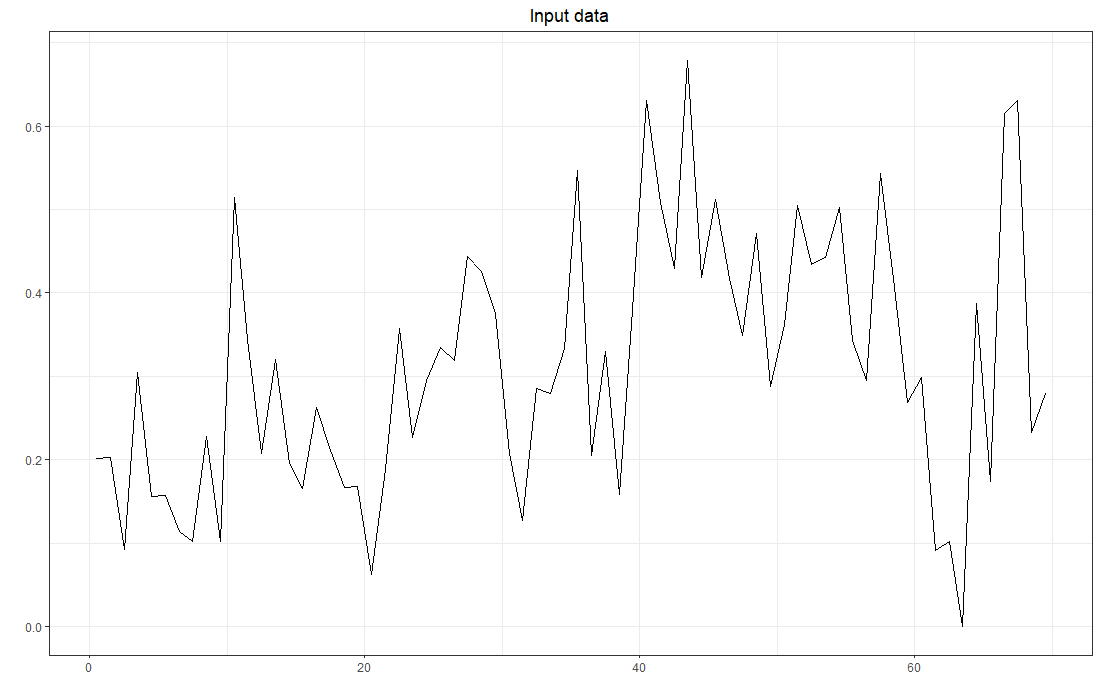}
\ImageNode[label={-90:Forth Derivative},right=of A]{B}{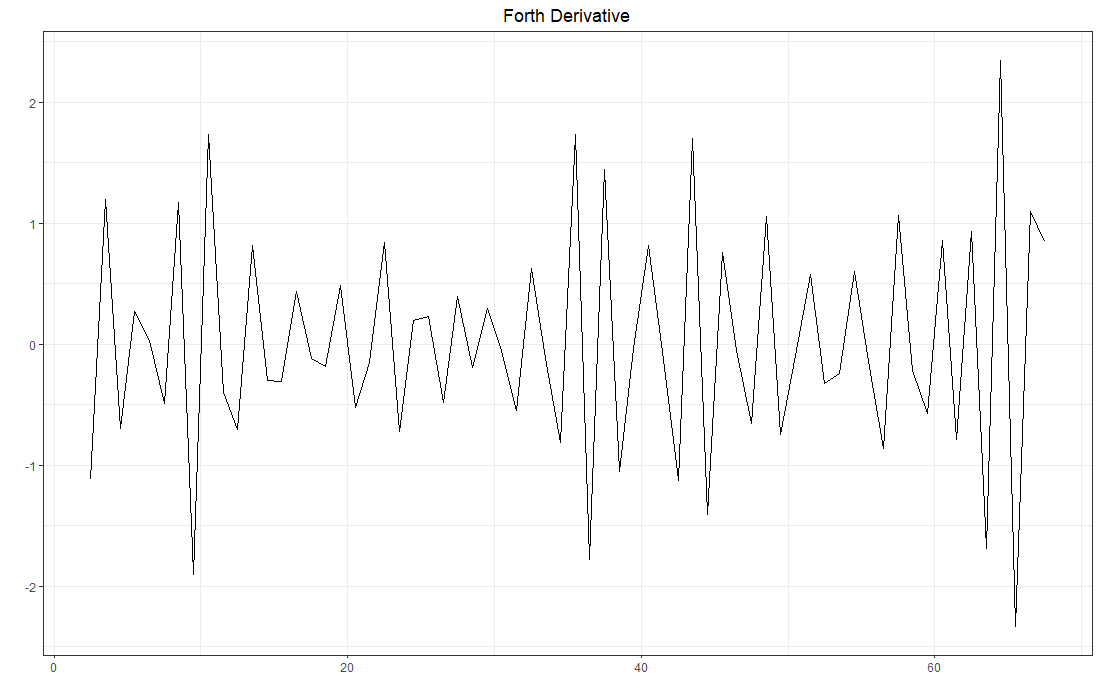}
\ImageNode[label={-90:Feature Function},right=of B]{C}{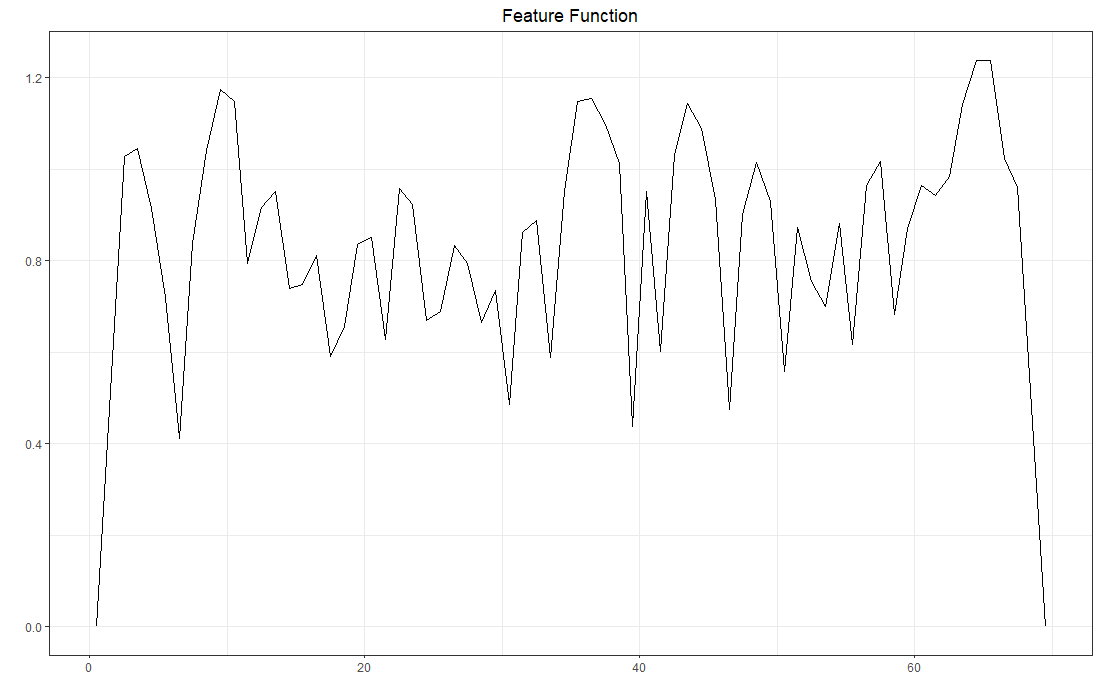}
\ImageNode[label={-90:Cumulative Feature Function},below left=of C]{D}{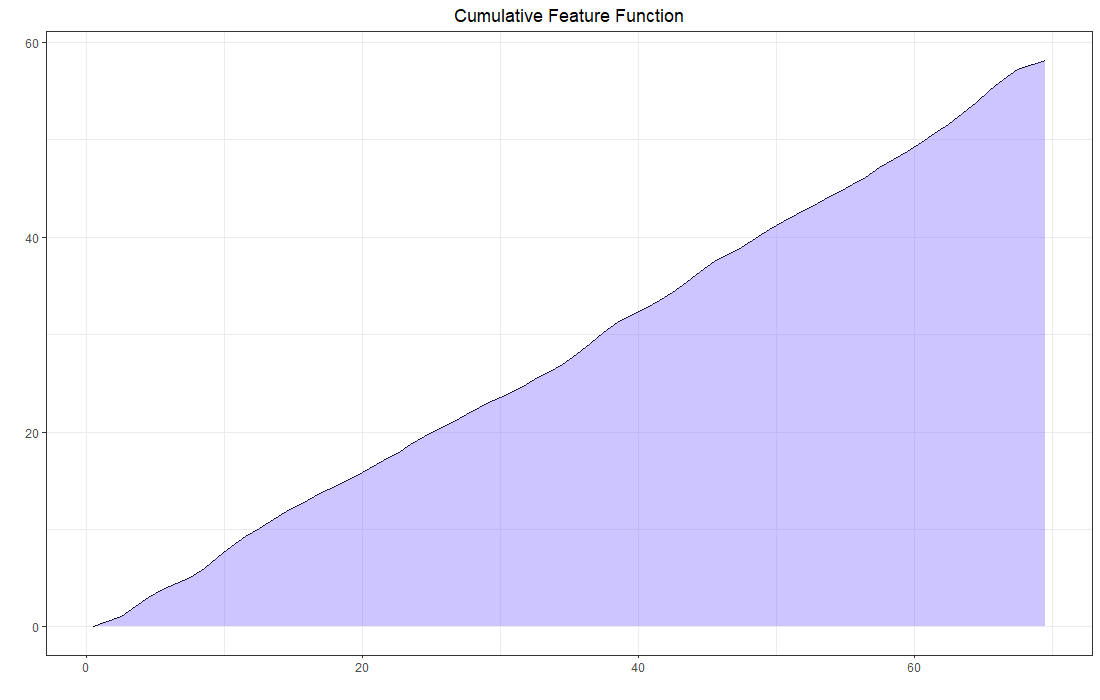}
\ImageNode[label={-90:Knot Placement},left=of D]{E}{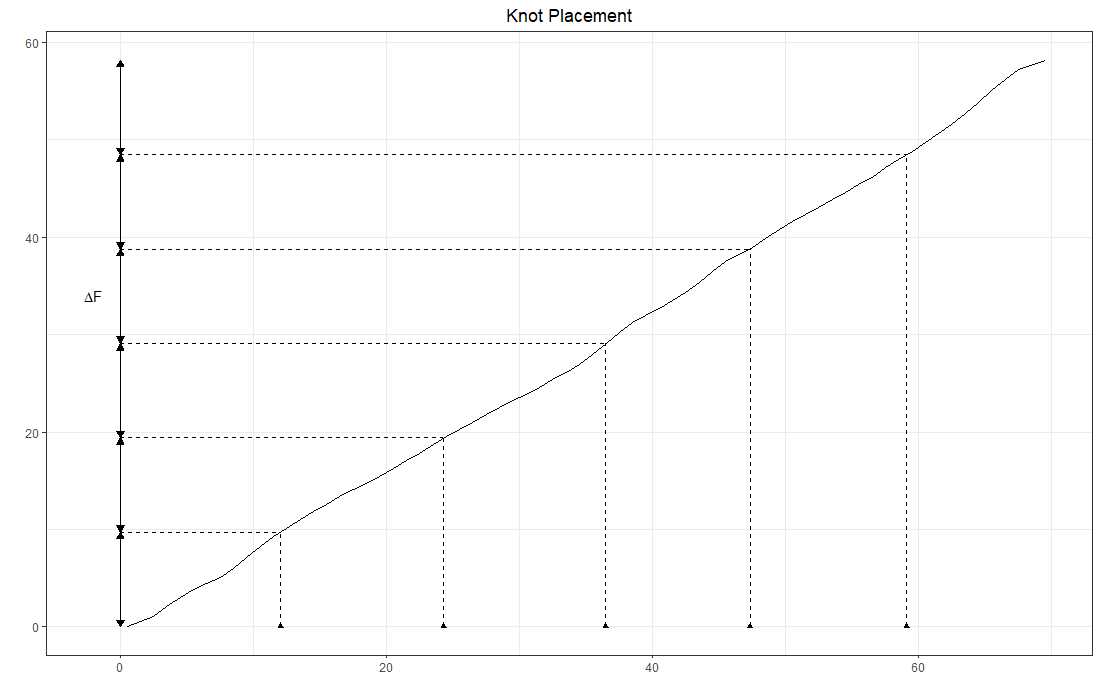}

\draw[arrow]
  (A) -- (B);
\draw[arrow]
  (B) -- (C);
\draw[arrow]
  (C) -- (D);
\draw[arrow]
  (D) -- (E);
\end{tikzpicture}
    \caption{Step by step illustration of knot placement method when using a degree 3 B-spline basis.}
    \label{knotplacement}
\end{figure}

For the number of knots to use, we choose the one with the best Bayesian information criterion (BIC) through a forward selection scheme. The idea is to increase the number of knots one at a time, and stop when none of the last 3 models (numbers of knots) has improved on the best value of the criterion.

\subsubsection{Moving average rate estimate}
Note that in order to use the knot placement method, we need a time series input of the target function, which we do not have for $\beta(t)$. To resolve this, "true" values of $\beta(t)$ are created by estimating the moving average rates between observations. Consider $w$ consecutive observations $x(t_i),\ldots,x(t_{i+w})$, we now build a mini model by assuming that the infection rate is a constant in this period. Then the estimate for $\beta$ using these data points will be the guess for the true value of $\beta((t_i+t_{i+w})/2)$. Next, the procedure is repeated over all windows of $w$ consecutive observations to get the curve values for knot selection.\\
For example, if the whole data is $x(0),x(1),\ldots,x(30)$ and window size is $w=3$, then $x(0),x(1),x(2)$ are used to estimate the value of $\beta(1)$; $x(1),x(2),x(3)$ for $\beta(2)$ and so on. The idea for this procedure is pretty similar to a moving average of a time series but the average series is for the hidden infection rate.

\section{Confidence interval}\label{CI}
A parametric bootstrap scheme is used to find the point wise confidence interval for $\beta(t)$. A complication is that the duration, $T_b$, of an epidemic for a bootstrap simulated data set can be shorter than the duration for the original data. Such bootstrap samples provide no information about $\beta(t)$ for any $t>T_b.$ This can lead to uninformative or even bad intervals for the infection rate function. An example is when both $I(t_1)$ and the estimated rate $\hat{\beta}(t)$ is small in the beginning, which can happen when the spline degree is 2 or higher, leading to many bootstrap samples terminating too early. A solution is to discard simulated paths that terminated early and use the ones that survived until the final observed time $t_M$ as bootstrap samples.

\subsection{Interval smoothing}
There are reasons to expect that utilizing information from neighbouring time points will improve confidence bound performance. If $\beta(t)$ is continuous at $t$, then the bounds at neighbouring time points should provide additional information and hence a way of reducing variability of confidence bounds. If, on the other hand, $t$ is a discontinuity point or a point of rapid increase, estimation uncertainty might lead to that jump being estimated to the right or the left of t. By incorporating information from  adjacent points that have the jump, one recognizes that $\beta(t)$ might plausibly have been much smaller or higher than suggested by its bounds. In interval smoothing we smooth out the pointwise confidence interval using adjacent time stamps. In particular, we consider three different ways of smoothing: weighted smoothing, sample smoothing and min-max smoothing.

The first way is to use the interval values themselves. Let $L_{t_i},U_{t_i}$ be the lower and upper bounds of the confidence interval for $\beta(t_i)$. Then the smoothed confidence interval $[\bar{L}_{t_i},\bar{U}_{t_i}]$ is calculated as the weighted sum of adjacent bounds as follows
\begin{align}
    \bar{L}_{t_i} = \dfrac{\sum_j w(t_i,t_j)L_{t_j}}{\sum_j w(t_i,t_j)},\quad
    \bar{U}_{t_i} = \dfrac{\sum_j w(t_i,t_j)U_{t_j}}{\sum_j w(t_i,t_j)}
\end{align}
where the weighting function $w(\cdot,\cdot)$ is the normal kernel
\begin{align*}
    w(x,y) = \phi(x-y).
\end{align*}

The second way is to combine the $\beta$ values from the bootstrap samples at adjacent time points and use them as the the samples for the middle point. Specifically, we use $\hat{\beta}^*_1(t_{i-1}),\ldots,\ \hat{\beta}^*_B(t_{i-1}),\ \hat{\beta}^*_1(t_{i}),\ldots,$ $ \hat{\beta}^*_B(t_{i}),\ \hat{\beta}^*_1(t_{i+1}),\ldots,\ \hat{\beta}^*_B(t_{i+1})$ as samples to construct the confidence interval of $\beta(t_i)$ instead of just $\hat{\beta}^*_1(t_{i}),\ldots,\hat{\beta}^*_B(t_{i})$. Here $\hat{\beta}^*_b(t)$ denotes the estimated infection rate at $t$ for the $b^{th}$ bootstrap sample and $B$ is the number of bootstrap samples. The intuition behind this step is to improve the coverage rate at places where there are significant changes in the infection rate.

The third method is to simply widen the bounds by setting the new upper bounds as the largest of all the surrounding bounds and the new lower bounds as the smallest of all the surrounding bounds. Specifically, the new interval $[\bar{L}_{t_i},\bar{U}_{t_i}]$ for $\beta(t_i)$ is
\begin{align}
    \bar{L}_{t_i} = \min\{L_{t_{i-1}},L_{t_{i}},L_{t_{i+1}}\},\quad
    \bar{U}_{t_i} = \max\{U_{t_{i-1}},U_{t_{i}},U_{t_{i+1}}\}.
\end{align}

\section{Simulation Study}\label{sim_study}
In this section, various performance aspects of the proposed model are investigated using simulated data. The data sets are generated using the \verb|R| package \verb|ssar| \cite{ssar}, which employs the Gillespie algorithm for exact simulation of the stochastic SIR model. Specifically, we mainly look at 4 typical epidemic patterns where the infection rate is constant (Simulation 1), increasing, decreasing, going up then down. We also run a fifth simulation where the infection rate is smoothly increasing. Each data set consists of 71 data points, the recovery rate is set to 0.1 for all 5 simulations, the infection rate for the constant case is set to 0.3 and each simulation is repeated 100 times. The infection rates for simulations 2 to 5 are plotted in Figure \ref{Figsim}. In addition, the populations are all set to $N=10000$ with initial proportion of susceptible and infected at $99\%$ and $1\%$, respectively.

\begin{figure}[h]
     \centering
     \begin{subfigure}[b]{0.48\textwidth}
         \centering
         \includegraphics[width=\textwidth]{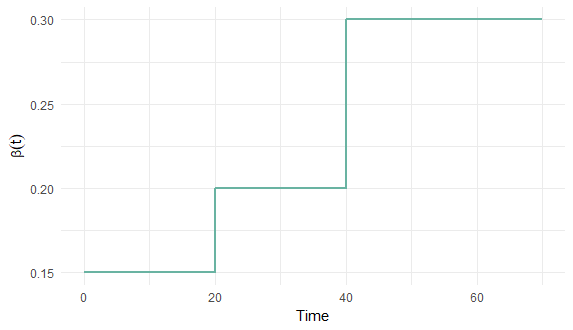}
         \caption{Simulation 2}
         \label{sim2}
     \end{subfigure}
     \hfill
     \begin{subfigure}[b]{0.48\textwidth}
         \centering
         \includegraphics[width=\textwidth]{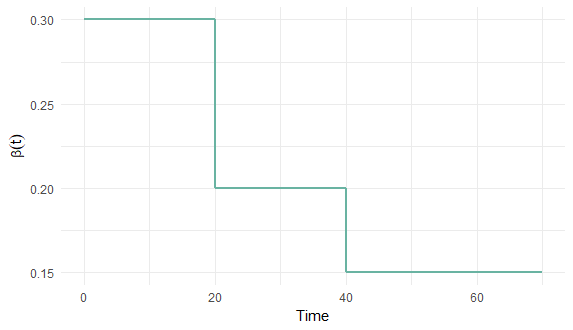}
         \caption{Simulation 3}
         \label{sim3}
     \end{subfigure}
     \\
     \centering
     \begin{subfigure}[b]{0.48\textwidth}
         \centering
         \includegraphics[width=\textwidth]{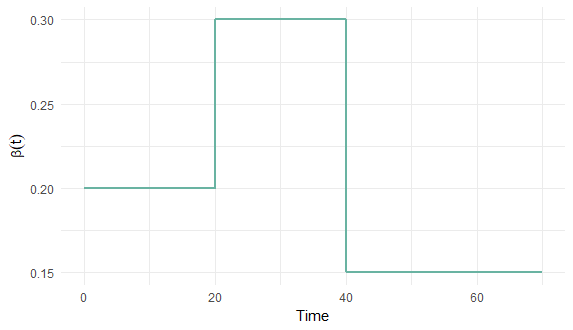}
         \caption{Simulation 4}
         \label{sim4}
     \end{subfigure}
     \hfill
     \begin{subfigure}[b]{0.48\textwidth}
         \centering
         \includegraphics[width=\textwidth]{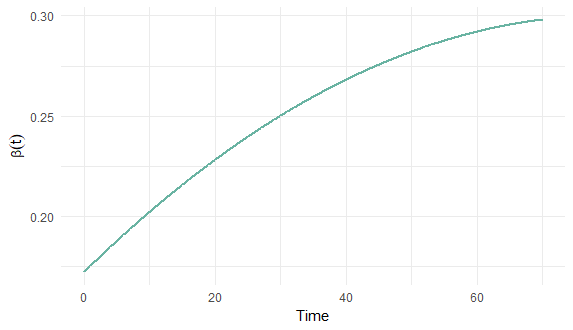}
         \caption{Simulation 5}
         \label{sim5}
     \end{subfigure}
        \caption{Infection rate functions of simulations 2 to 5.}
        \label{Figsim}
\end{figure}

\subsection{Likelihood choice}
In this subsection, we look at the performance of the two methods for likelihood approximation: Diffusion approximation and Tau leaping. To this end, we use the integrated mean squared error (IMSE) between the estimated $\hat{\beta}(t)$ and true infection rate $\beta(t)$ to measure the performance quality of each method, i.e.
\begin{align}
    IMSE(\hat{\beta},\beta) = \int(\hat{\beta}(t)-\beta(t))^2dt.
\end{align}
Figure \ref{figlikelihood} shows the estimation quality comparison between the two single-step likelihood approximation methods when the Regression Spline framework is used across different disease patterns. Based on that, the Tau leaping method performs slightly better than diffusion approximation in all settings. This can be attributed to the former only having one approximation layer (true process is approximated by Euler-Maruyama scheme) while the latter has two approximation layers (true process is approximated by a diffusion process then diffusion process is approximated by Euler-Maruyama scheme). Therefore, we will focus on the Tau leaping likelihood approximation in results below for single-step likelihood. 

\begin{figure}[h]
    \centering
    \includegraphics[width=\textwidth]{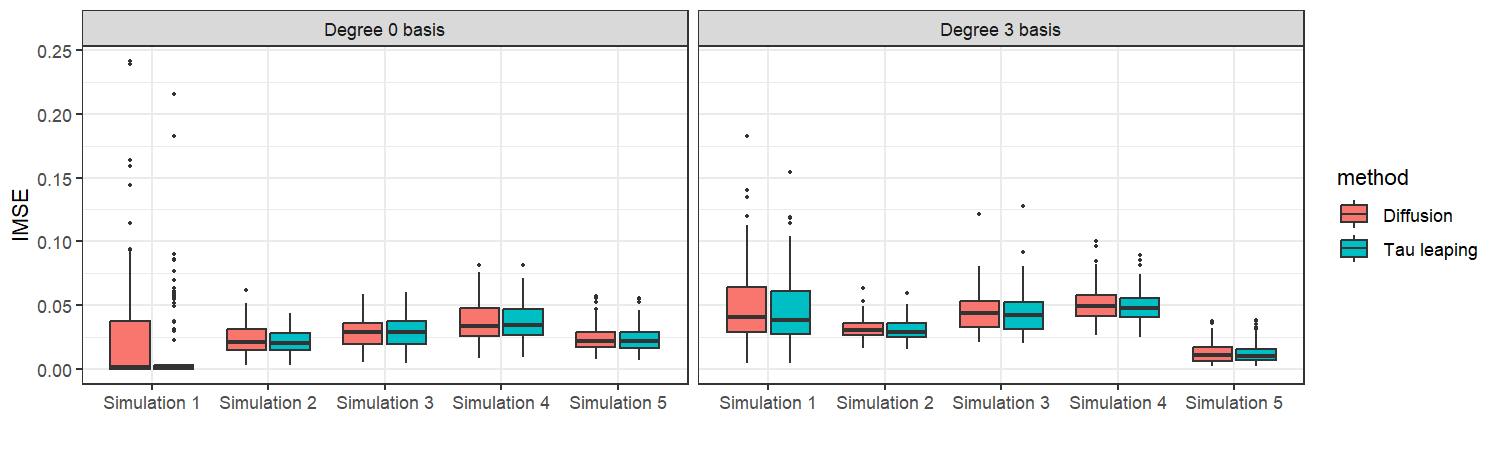}
    \caption{IMSE for different likelihood approximation methods using Regression Spline framework with different B-spline bases.}
    \label{figlikelihood}
\end{figure}

\subsection{Estimations}
To see how each method performs at different stages of an epidemic we plotted the estimations in time in Figures \ref{FigSim_res_0} and \ref{FigSim_res_3}. The solid lines are the true rates, the dashed lines and two bands are the average, $5\%$ and $95\%$ quantiles of the estimations, respectively. 
Here we consider three methods: 2 using 1-step tau-leaping likelihood approximation, one for each framework, and 1 using 20-step diffusion likelihood approximation with 100 sample paths for the RS framework. The 20-step scheme seems to outperform the other methods in terms of bias since this approximation allows the model to better capture the changes in the compartments between 2 consecutive time points. The trade-off is that multi-step methods have larger variance due to accumulating errors from the simulated paths. Looking at estimates and width of the intervals giving the 5th and 95th percentiles, we see that methods sometimes under-estimate when the first change point occurs in Simulations 1 and 4 and sometimes over-estimate the timing of the second change in Simulation 3. These are always associated with the smaller jump of the two but it is not clear what the reason is for the direction of the uncertainty.  

Comparing to the results in Figure \ref{FigSim_res_0}, which use a degree 0 (step-wise) basis, to the ones in Figure \ref{FigSim_res_3}, which use degree 3 (smooth) basis, we can see that the step-wise basis works better for the step-wise truths and the smooth basis for the smooth truth. This means that there is a model specification aspect to consider when choosing the method. Perhaps surprisingly, however, the degree 0 splines do fairly well at estimating a continuously increasing curve. Similarly, although the degree 3 splines do not as accurately indicate the rapid rates of increase in the change point simulations 1-3, they do get the gross features correct. The one-step estimation shows a bias with a constant term. 

\begin{figure}[h]
    \centering
    \includegraphics[width=\textwidth]{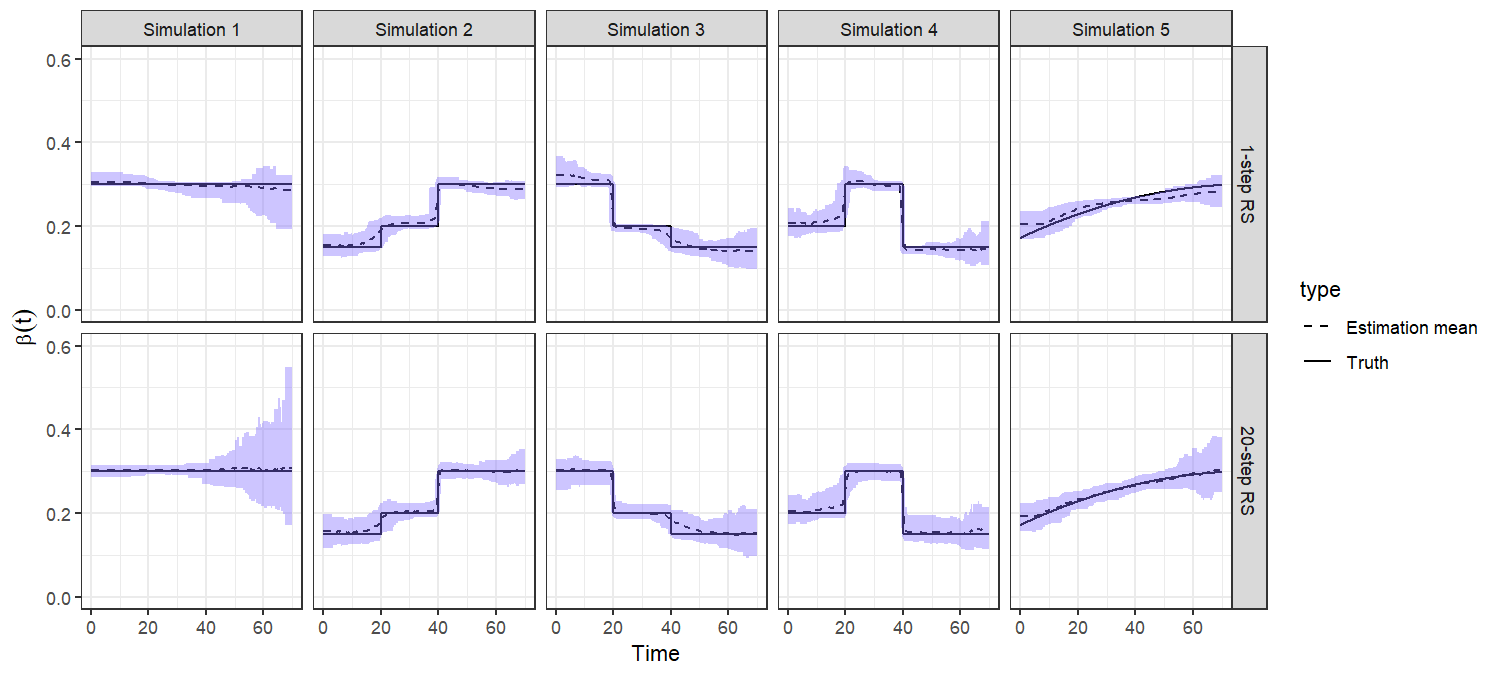}
    \caption{Estimation performance in time for different methods using degree 0 B-spline basis.}
    \label{FigSim_res_0}
\end{figure}

\begin{figure}[h]
    \centering
    \includegraphics[width=\textwidth]{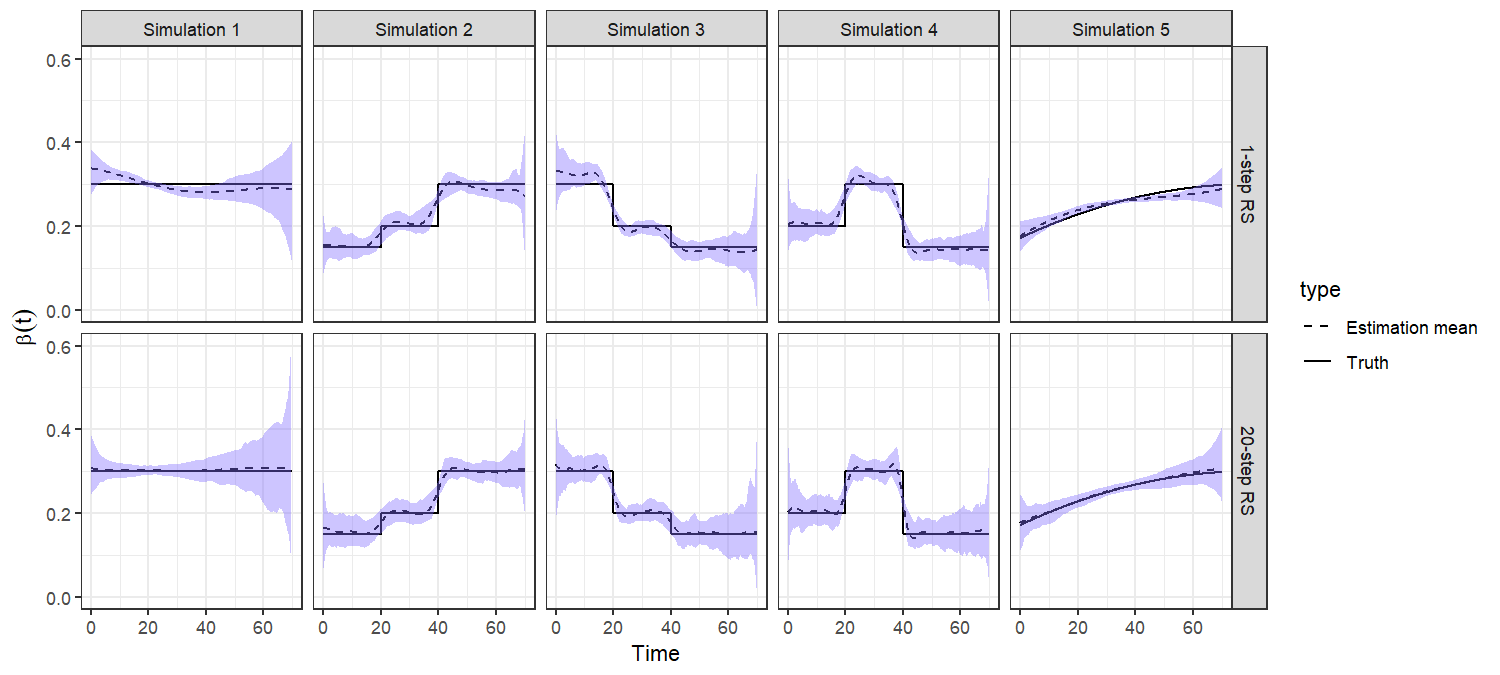}
    \caption{Estimation performance in time for different methods using degree 3 B-spline basis.}
    \label{FigSim_res_3}
\end{figure}

\subsection{Confidence interval}
Figures \ref{Fig_CIbias0} and \ref{Fig_CIbias3} show the coverage rates coverage rates (proportions of times the intervals contained the true rate) of 95\% pointwise confidence intervals for $\beta(t)$, with and without bias correction. The biases and small variation of estimation evident in Figures 5 and 6 (Single step approximation) are reflected in the parametric bootstraps used to construct the intervals and lead to bad under-coverage in time periods, where those time periods largely correspond to regions of bias combined with small variation in Figures 5 and 6. For Simulations 1-3, not surprisingly, under-coverage is particularly bad near change points but we see that this extends to regions in between change points and to regions of bias/low-variation where there are no change points. 

For both kinds of intervals, bias correction does appear to generally help with coverage rates, especially with a degree 3 basis. Even when the coverage rates of some time points suffer, the overall coverage across all time points still increase. Therefore, we suggest using bias correction for both percentile and normal confidence intervals.

\begin{figure}[h]
    \centering
    \includegraphics[width=\textwidth]{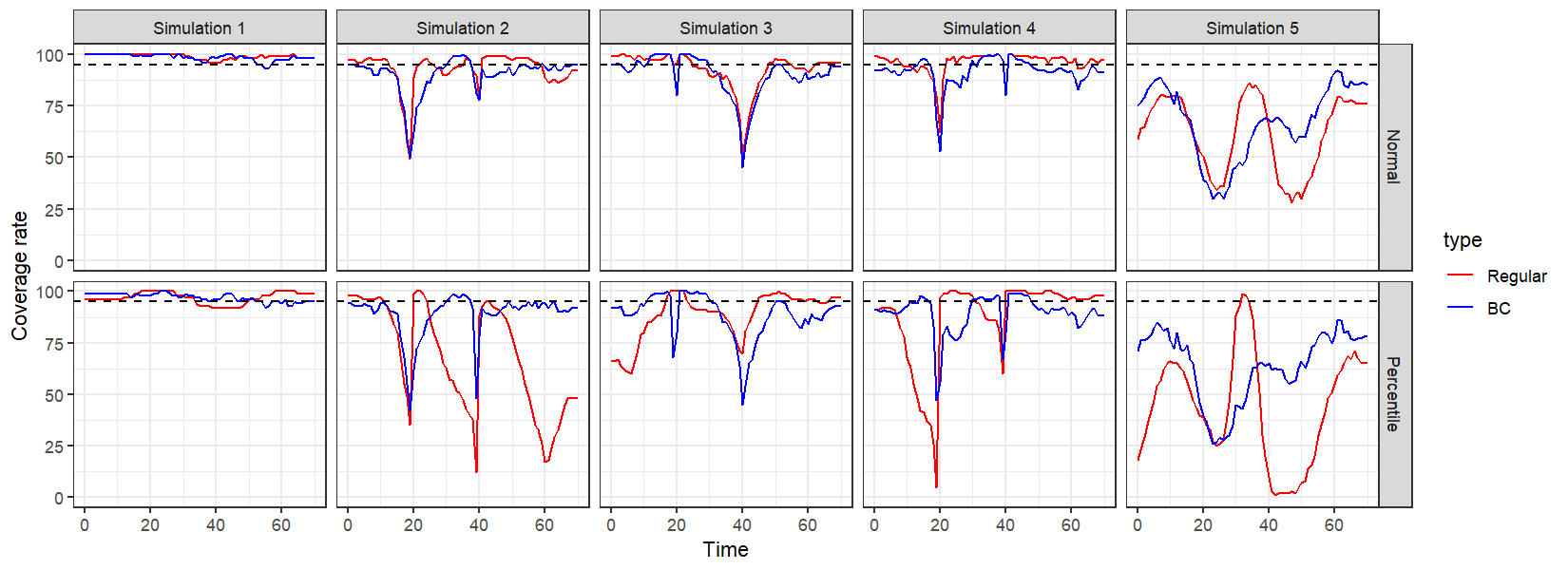}
    \caption{Coverage rate comparison of confidence intervals using degree 0 B-spline basis with and without bias correction (BC).}
    \label{Fig_CIbias0}
\end{figure}

\begin{figure}[h!]
    \centering
    \includegraphics[width=\textwidth]{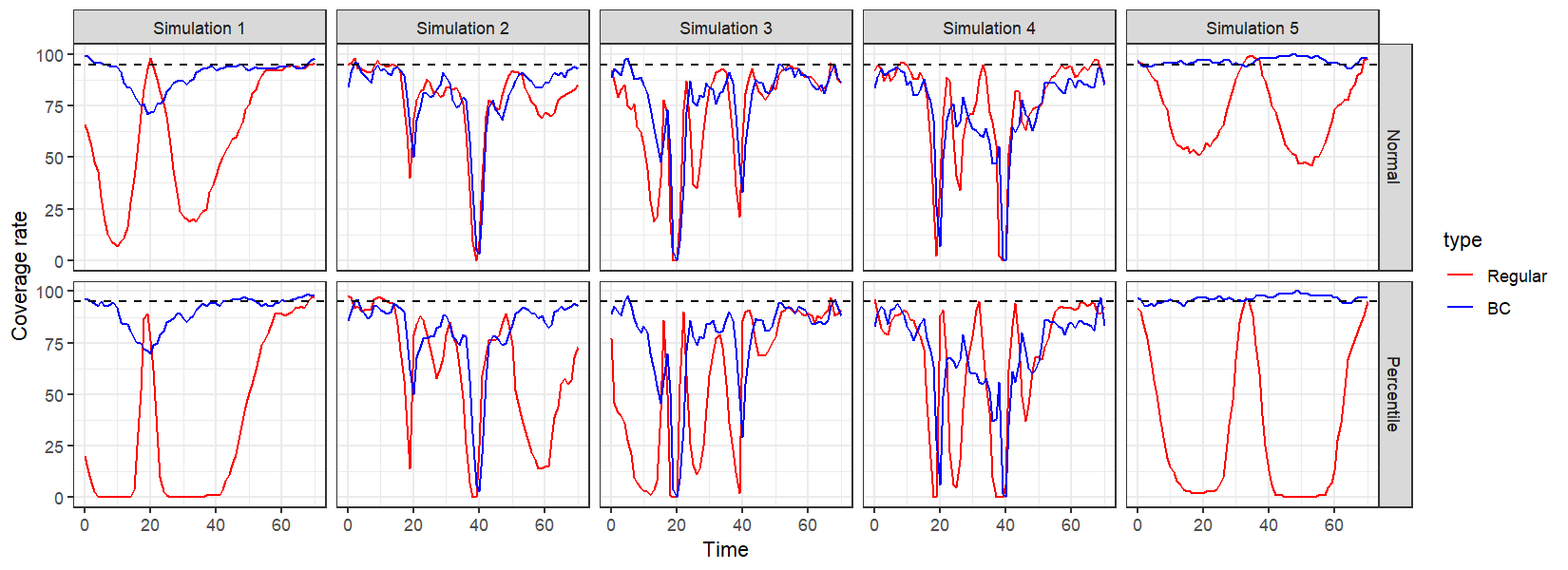}
    \caption{Coverage rate comparison of confidence intervals using degree 3 B-spline basis with and without bias correction (BC).}
    \label{Fig_CIbias3}
\end{figure}

Another aspect to look at is how bias correction helps when used in conjunction with the interval smoothing methods. Figures \ref{Fig_CIboth0} and \ref{Fig_CIboth3} illustrate the performance of the interval smoothing methods with and without bias correction along with the intervals where only bias correction is applied. These show that when both adjustments are applied, the coverage rates tend to be better than when only one is applied. In addition, the min-max smoothing method has the best coverage
rates out of the three in most cases. This is to be expected as min-max smoothing widens the intervals, guaranteeing improvement in coverage rates in cases where they are too low. The performance of the other two is interesting as weighted smoothing works better when a degree 0 basis is used whereas sample smoothing works better for a degree 3 basis. The reason may lie in the nature of each basis. A degree 0 basis gives step-wise constant estimates so weighted smoothing can improve the smoothness between intervals at different time points. A degree 3 basis, on the other hand, has smoothness but lacks the ability to rapidly change its values like a degree 0 basis, which makes sample smoothing more useful since it helps expand the bootstrap sample range in places where the infection rate changes quickly.

\begin{figure}[h]
    \centering
    \includegraphics[width=\textwidth]{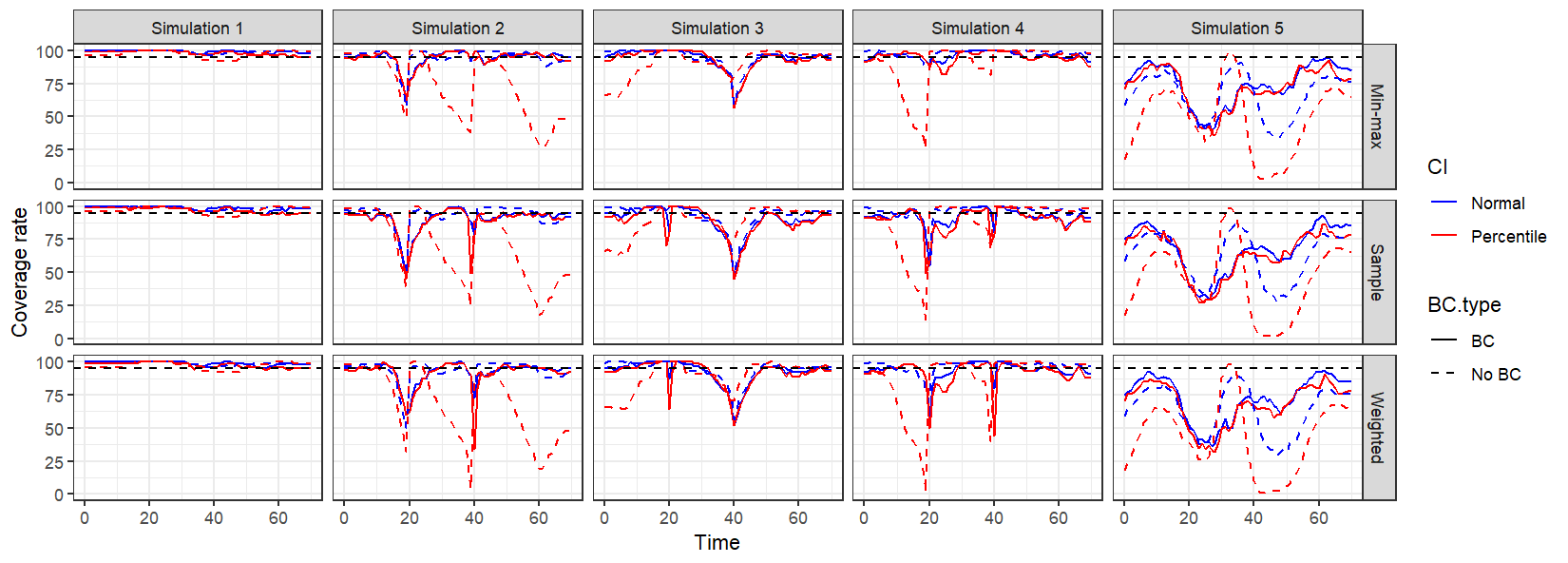}
    \caption{Coverage rate comparison of smoothed confidence intervals using degree 0 B-spline basis with and without bias correction (BC).}
    \label{Fig_CIboth0}
\end{figure}

\begin{figure}[h]
    \centering
    \includegraphics[width=\textwidth]{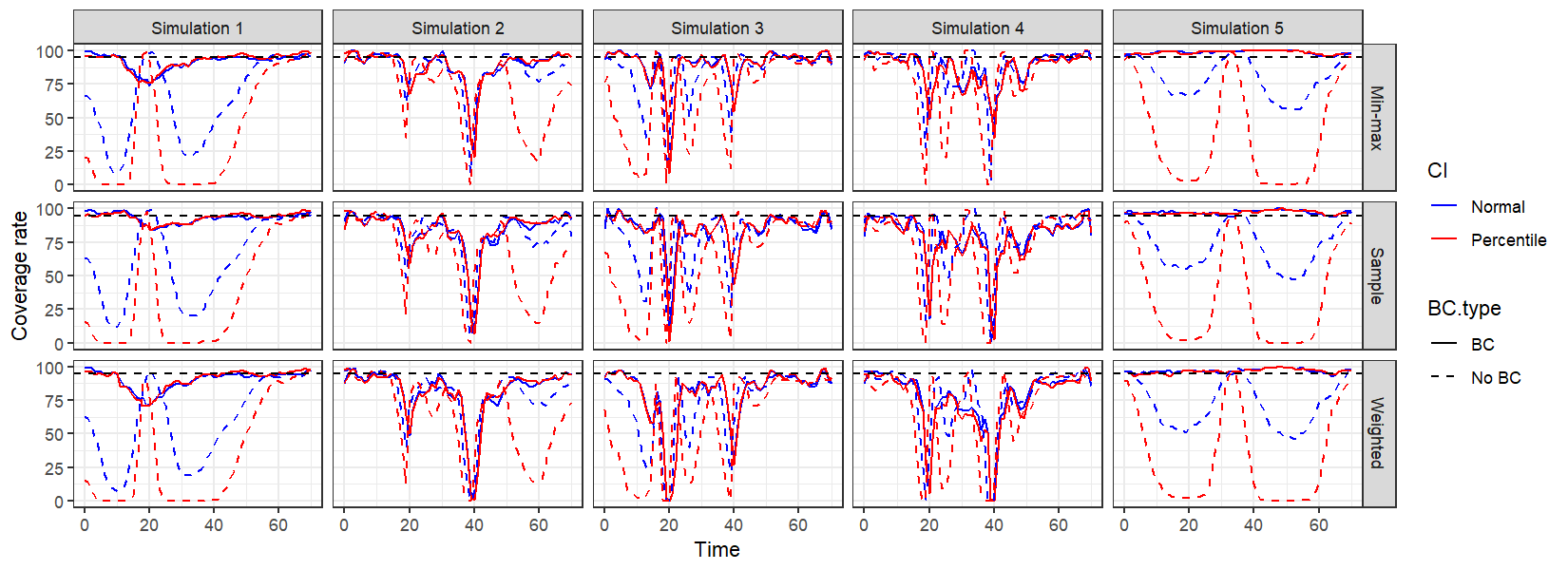}
    \caption{Coverage rate comparison of smoothed confidence intervals using degree 3 B-spline basis with and without bias correction (BC).}
    \label{Fig_CIboth3}
\end{figure}

To summarize, the optimal combination uses bias correction together with min-max smoothing. Figure \ref{Fig_CI_final} shows the coverage rate of the confidence interval with this combination.

\begin{figure}[h]
    \centering
    \includegraphics[width=\textwidth]{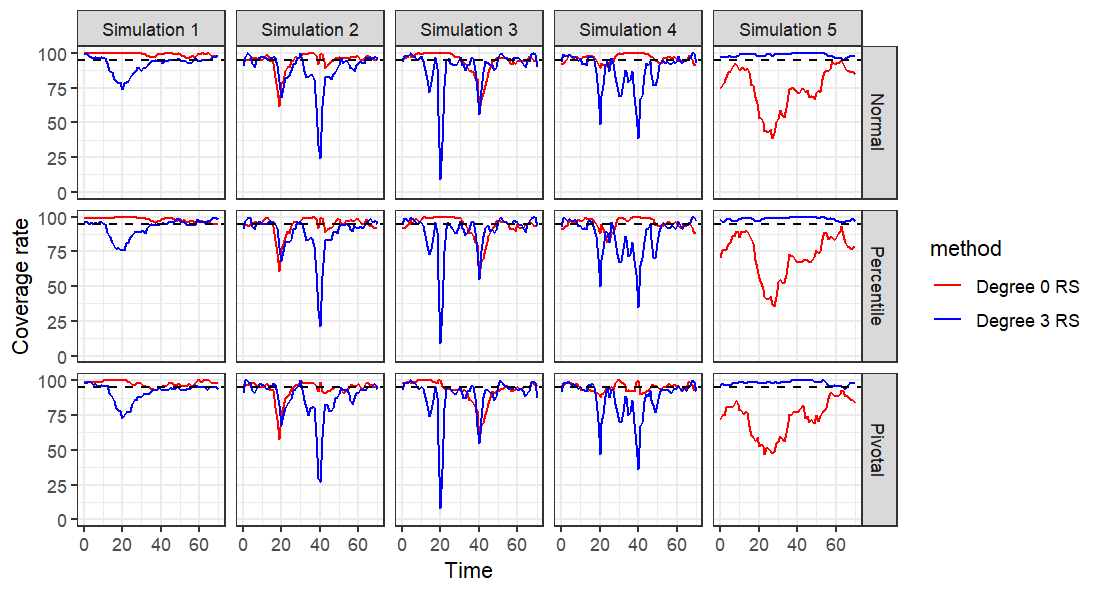}
    \caption{Coverage rate of the optimal combination (bias correction and min-max smoothing) for confidence interval.}
    \label{Fig_CI_final}
\end{figure}

\section{Application}\label{application}
In this section, we will be estimating the basic reproduction number $R_0(t)=\beta(t)/\gamma$ of the COVID-19 data from Ontario between January 2020 and January 2022. The goal is to see how our proposed framework performs for a period in which multiple waves have occurred.

The data was obtained from \cite{Covid_data} in early 2022 when the number of active cases was still recorded. The population is set to $N=14,223,942$, which is the population of Ontario in 2021 according to \cite{Population_data}. For our model, the number of susceptible $S$ is obtained by subtracting the cumulative cases from the population and the number of infected $I$ is the number of active cases in the data.
We use the 1-step diffusion and tau leaping method for likelihood approximation, BIC for model selection, window sizes 2 (daily),4 (3 days) and 8 (weekly), and both degree 0 and 3 B-spline bases. The estimates are plotted in Figures \ref{FigCovidRes0} and \ref{FigCovidRes3}. For a degree 0 basis, the results from the daily and weekly window are more simple with fewer change points. For a degree 3 basis, estimates agree across all window sizes and likelihood approximation methods with only slight differences. With that in mind, we shall use the 3 days window and Tau leaping likelihood to get the confidence intervals for both bases since the estimates for this setting are the most consistent. In addition, the BIC for 3 days window with degree 0 basis is significantly lower than the other two.
\begin{figure}[h!]
    \centering
    \includegraphics[width=\textwidth,height=9.3cm]{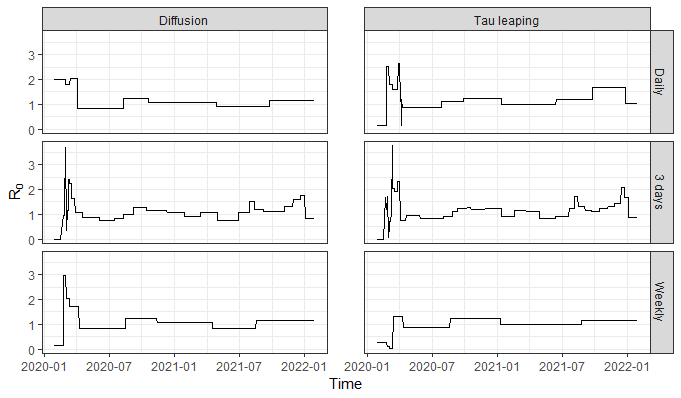}
    \caption{$R_0(t)$ estimates for COVID data using degree 0 basis}
    \label{FigCovidRes0}
\end{figure}

\begin{figure}[h!]
    \centering
    \includegraphics[width=\textwidth,height=9.3cm]{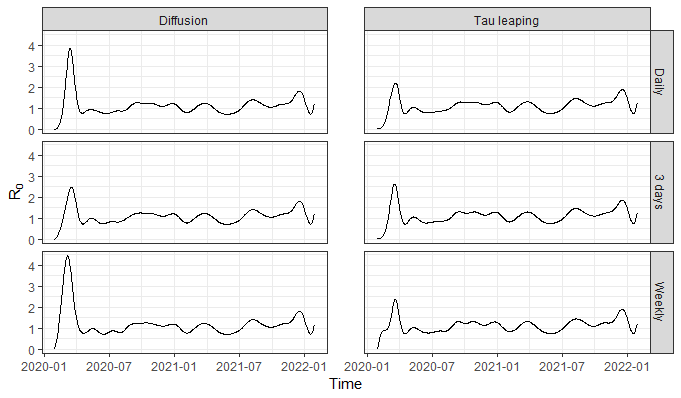}
    \caption{$R_0(t)$ estimates for COVID data using degree 3 basis}
    \label{FigCovidRes3}
\end{figure}

\begin{figure}[h!]
    \centering
    \includegraphics[width=\textwidth]{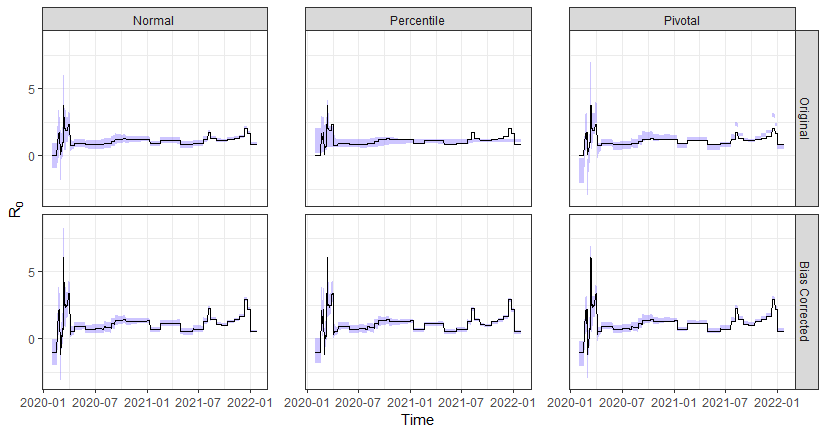}
    \caption{Confidence intervals for degree 0 basis.}
    \label{FigCovidCI0}
\end{figure}

\begin{figure}[h!]
    \centering
    \includegraphics[width=\textwidth]{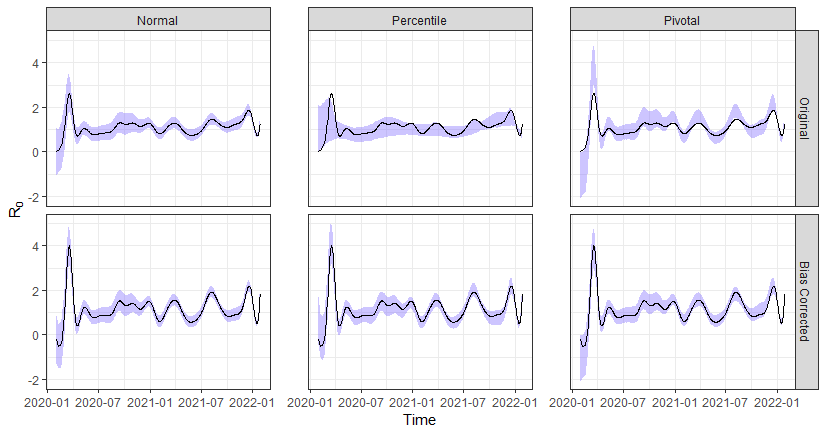}
    \caption{Confidence intervals for degree 3 basis.}
    \label{FigCovidCI3}
\end{figure}

\begin{figure}[h!]
     \centering
     \begin{subfigure}[b]{\textwidth}
         \centering
         \includegraphics[width=\textwidth,height=5.5cm]{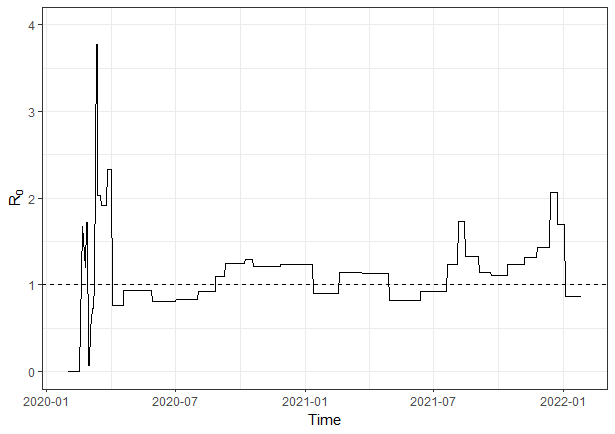}
         \caption{Degree 0 splines}
     \end{subfigure}\\
     \begin{subfigure}[b]{\textwidth}
         \centering
         \includegraphics[width=\textwidth,height=5.5cm]{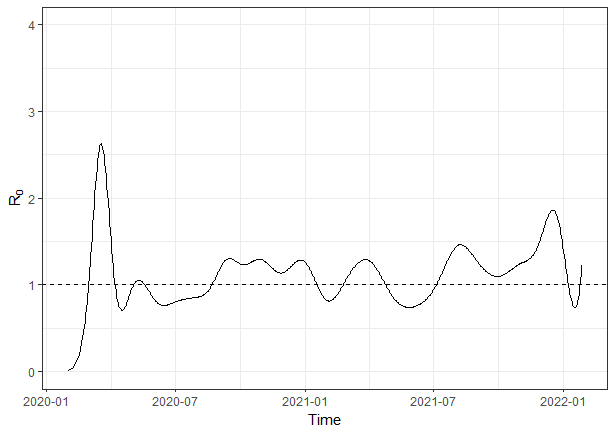}
         \caption{Degree 3 splines}
     \end{subfigure}
     \begin{subfigure}[b]{\textwidth}
         \centering
         \includegraphics[width=\textwidth,height=6.5cm]{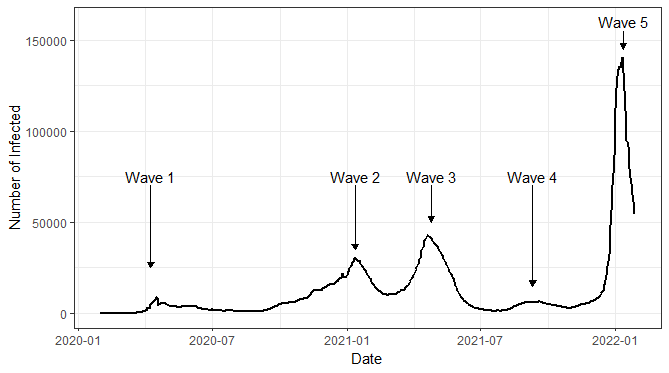}
         \caption{Active cases data}
     \end{subfigure}
        \caption{Estimated $R_0(t)$ using 3 days window compared to outbreaks.}
        \label{Fig_wave}
\end{figure}

For the confidence intervals, we use the parametric bootstrap scheme discussed in Section \ref{CI} with the samples generated by the Tau leaping method since simulation using the Gillespie algorithm is too time consuming for such a large population. 
The results are shown in Figures \ref{FigCovidCI0} and \ref{FigCovidCI3}. Note that for the pivotal interval, bias correction only applies to the estimate not the interval. Looking at the original percentile intervals, we can see that bias correction is necessary, especially for degree 0 basis. The only concern for is that the bias corrected estimate for the infection rate has a portion that lies below 0, which is clearly not true. However, the period where this happens is at the beginning of the epidemic where the number of cases is very small, which is understandable. Finally, Figure \ref{Fig_wave} shows the reproduction estimates chosen by our method compared to outbreak dates. It seems the peaks in reproduction number chosen by this model closely match the major waves.

\section{Discussion}
In this work, we develop a framework for nonparametric inference of the infection rate function for the SIR model. The two main ideas of this framework is approximating the SIR likelihood function with a different process and using a B-spline basis, which is determined by applying a knot placement method on the moving average rate estimates, to estimate the infection rate. We investigate two ways of approximating the likelihood function for the model using diffusion approximation and Tau leaping. Each of these methods can be made more accurate by using a multi-step scheme which involves simulating sample paths between observations. 

In our simulation study, the multi-step approximations have smaller biases but more variation compared to the single-step approach. 
However, the variation are relatively small for both methods. 
Therefore, the trade-off of an increase in variation for reduced bias makes a multi-step approximation preferable. 
One difficulty is that the multi-step approximation requires much greater computational resources when bootstrapping is used. Thus we expect that the reason the 20-step approximation does well is because the single-step approximation breaks down in intervals of time between observation where the process is highly non-homogeneous. We can see that these are intervals where $S(t)$ and $I(t)$ change rapidly. This suggests, as a potentially valuable future research direction, that we might reduce computational time by using an hybrid, adaptive approach with varying numbers of steps depending on the observation intervals; more steps being used in intervals where $S(t)$ and $I(t)$ are rapidly changing. In addition, if in the definition of the SIR model, the two Poisson processes were homogeneous, the Tau leaping calculations would be exact. This suggest using Tau leaping is preferable to diffusion approximation if resources allow.

In the scenario where $\beta(t)$ is constant, the single-step approach has a noticeable bias:
It slightly over estimates $\beta(t)$ at early stages. Since the multi-step approximation does not show the same bias, we believe that the bias is caused by the rapid change of $S(t)$ and $I(t)$ at the early stage of the epidemic.
Not surprisingly, a degree 3 spline give better estimation when $\beta(t)$ is a continuous function and a degree 0 spline estimate is better when $\beta(t)$ is piece-wise constant. 
The degree 0 spline still perform surprisingly well for continuous $\beta(t)$ and the degree 3 spline can still capture the shape of a piece-wise constant function $\beta(t)$. It is possible that degree 3 estimation might perform better in cases with change points with different approaches to choosing knots that allow rapid changes in small intervals, which could be valuable for future research. 

The choice of a degree 0 or degree 3 spline becomes more important in constructing confidence intervals. 
Our results show that the biases in estimation creates substantial problems with confidence intervals. 
With bias correction and smoothing, the problems of low coverage could be remedied to a degree but only if the choice of spline is appropriate for the unknown transition rate $\beta(t)$ (Figure \ref{Fig_CI_final}). 
That is, when $\beta(t)$ is piece-wise constant, we should use the degree 0 spline and when $\beta(t)$ is a continuous function, the degree 3 spline is more appropriate. 
Note that, this did not completely alleviate the problems for change point settings and even the degree 0 confidence intervals tends to undercover near some of the change-points with lower magnitude of change.
Furthermore, because coverage is good at nearby time points, the practical implication in applications is that although change-points can be ascribed as having occurred with some certainty, the precise location of change may be difficult to pin down. Future work on adjustments to knot-finding and use of multi-step approximations alluded to above may give better performance and help to make the degree of the spline less important to confidence interval calculation. 

Finally, we apply our methods to the COVID-19 data in Ontario over a two year period. The final models agree with the $5$ major COVID-19 waves suggesting that our approach can be used to predict emerging epidemic waves.
Also of interest is that waves are preceded by (often relatively short) periods of time where $R_0(t)$ is larger than 1. This suggests that the approach might be used in an on-line manner, with continuous updating, as a prediction of waves. 
We can also consider more complex compartmental models, such as the SEIR (Susceptible-Exposed-Infected-Removed) model, and allowing the recovery rate $\gamma$ to vary through time for future works.

\section*{Acknowledgement}
ES and LSTH were supported by the Natural Sciences and Engineering Research Council of Canada (NSERC) Discovery
Grants and a grant from the NSERC Emerging Infectious Diseases Modelling Initiative program.

\clearpage

\bibliographystyle{plainurl}
\bibliography{references}

\appendix
\section{Proof of lemma \ref{lem_Wasserstein}}
For all $t\in [0,T],\epsilon>0$ we have
\begin{equation}\label{2.11}
\begin{alignedat}{3}
E(\|U_n-V\|_T)&
=E(\sup\limits_{t\in [0,T]}\|U_n(t)-V(t)\|)
\ge E(\|U_n(t)-V(t)\|)\\
&\ge \epsilon P(\|U_n(t)-V(t)\|>\epsilon).
\end{alignedat}
\end{equation}

Taking infimum over all couplings of $U_n(t)$ and $V(t)$ in \eqref{2.11} gives
\begin{align*}
    W_{1,T}(U_n,V)\ge \epsilon\inf P(\|U_n(t)-V(t)\|>\epsilon).
\end{align*}

Since $W_{1,T}(U_n,V)\xrightarrow{n\to\infty} 0$, we have
\begin{align}\label{2.13}
    \inf P(\|U_n(t)-V(t)\|>\epsilon)\xrightarrow{n\to\infty} 0\quad \forall\ \epsilon>0.
\end{align}

Next, we have for all $u\in \mathbb{R}^d, \epsilon>0$
\begin{equation}\label{2.14}
\begin{alignedat}{3}
F_{U_n(t)}(u)&=P(U_n(t)\le u)\le P(V(t)\le u+\epsilon\textbf{1})+P(\|U_n(t)-V(t)\|>\epsilon)\\
&=F_{V(t)}(u+\epsilon\textbf{1})+P(\|U_n(t)-V(t)\|>\epsilon)
\end{alignedat}
\end{equation}
where $\textbf{1}$ is the vector of 1's and the inequalities here are element-wise.
This is true since if $U_n(t)\le u$ and $\|U_n(t)-V(t)\|\le\epsilon$ then $V_n\le u+\epsilon\textbf{1}$. Applying this for $u-\epsilon\textbf{1}$ with the role of $U_n(t)$ and $V(t)$ swapped, we have
\begin{align}\label{2.15}
F_{V(t)}(u-\epsilon\textbf{1})&\le F_{U_n(t)}(u)+P(\|U_n(t)-V(t)\|>\epsilon).
\end{align}

Combining \eqref{2.14} and \eqref{2.15} gives us
\begin{equation}\label{2.16}
\begin{alignedat}{3}
F_{V(t)}(u-\epsilon\textbf{1})-P(\|U_n(t)-V(t)\|>\epsilon)
\le F_{U_n(t)}(u) \\
\le F_{V(t)}(u+\epsilon\textbf{1})+P(\|U_n(t)-V(t)\|>\epsilon).
\end{alignedat}
\end{equation}

In \eqref{2.16}, taking the infimum over all couplings of $U_n(t)$ and $V(t)$ gives
\begin{equation}\label{2.17}
\begin{alignedat}{3}
F_{V(t)}(u-\epsilon\textbf{1})-\inf P(\|U_n(t)-V(t)\|>\epsilon)
\le F_{U_n(t)}(u) \\
\le F_{V(t)}(u+\epsilon\textbf{1})+\inf P(\|U_n(t)-V(t)\|>\epsilon).
\end{alignedat}
\end{equation}

Note that the cdf's are not affected by the coupling since the marginals are fixed. This combined with \eqref{2.13} and letting $\epsilon\to 0,\ n\to\infty$ gives us
\begin{align}
F_{U_n(t)}(u)\xrightarrow{n\to\infty} F_{V(t)}(u).
\end{align}
In other words, $U_n(t) \xrightarrow{d} V(t)$ as $n\to\infty$ for all $t\in [0,T]$.

\end{document}